\newtheorem{proposition}{Proposition}
\newenvironment{definition}[1][Definition]{\begin{trivlist}
\item[\hskip \labelsep {\bfseries #1}]}{\end{trivlist}}
\renewcommand{\c}{\alpha} 
\renewcommand{\b}{\beta} 
\newif\iffancy
\begin{document}

\title{
Classifying 50 years of Bell inequalities
}
\author{Denis Rosset}
\email{denis.rosset@unige.ch}
\affiliation{Group of Applied Physics, University of Geneva}
\author{Jean-Daniel Bancal}
\email{jdbancal.physics@gmail.com}
\affiliation{Center for Quantum Technologies, University of Singapore}
\author{Nicolas Gisin}
\affiliation{Group of Applied Physics, University of Geneva}

\date{\today}

\begin{abstract}
Since John S. Bell demonstrated the interest of studying linear combinations of probabilities in relation with the EPR paradox in 1964, Bell inequalities have lead to numerous developments. Unfortunately, the description of Bell inequalities is subject to several degeneracies, which make any exchange of information about them unnecessarily hard. Here, we analyze these degeneracies and propose a decomposition for Bell-like inequalities based on a set of reference expressions which is not affected by them. These reference expressions set a common ground for comparing Bell inequalities. We provide algorithms based on finite group theory to compute this decomposition. Implementing these algorithms allows us to set up a compendium of reference Bell-like inequalities, available online at http://www.faacets.com~. This website constitutes a platform where registered Bell-like inequalities can be explored, new inequalities can be compared to previously-known ones and relevant information on Bell inequalities can be added in a collaborative manner.
\end{abstract}

\maketitle


\section*{Introduction} 
\iffancy\addcontentsline{toc}{section}{\hspace*{-\tocsep}Introduction}\fi 

The years 1990’s started with the seminal paper presenting the Ekert'91 protocol~\cite{Ekert91}, relating quantum nonlocality to secure communication. This changed the world as far as quantum nonlocality is concerned; the study of Bell inequalities became respectable. So far not much was known beyond the famous CHSH inequality~\cite{chsh}. Here it is noteworthy to mention that Bell's original inequality published in 1964~\cite{Bell64} is not a Bell inequality in the modern sense, because it relies on the additional assumption of perfect anti-correlation when both sides perform the same measurement. In particular, little was known when the parties perform measurements with more than two possible outcomes. Kaszlikowski and co-workers performed numerical searches for experimental scenarios more resistant to noise~\cite{Kaszlikowski00}; this effort led Dan Collins, then at Geneva University, and colleagues to find the family of inequalities behind Kaszlikowski et al. finding, today known as the CGLMP inequalities~\cite{CGLMP04}. Meanwhile, Pitowski and Svozil, building on their understanding that the set of local correlations constitues a polytope, could find all the inequalities corresponding to the facets of two scenarios of interest~\cite{Pitowsky}. In a subsequent work, Sliwa~\cite{Sliwa} and Collins-Gisin~\cite{Collins} grouped the results of Pitowski and Svozil into families of inequalities equivalent under relabelings. In particular Sliwa found all the families corresponding to the scenario with 3 parties and binary inputs and outcomes, while Collins-Gisin found, among others, the family known as $I_{nnmm}$. Avis, Imai, Ito and Sakasi found many more Bell inequalities using specialized cut-polytopes~\cite{Avis}. And so the field expanded very significantly, though it would still be nice to have more families of inequalities valid for arbitrary number of parties, measurement settings and outcomes~\cite{Mermin90,Ardehali92,Belinskii93,Avis06,Bancal12}. Also, experiments on Bell's inequalities went out of the lab and entered applied physics~\cite{Tittel98,Jennewein00,Naik00,Tittel00}.
 
Another trend that started was the use of Bell-like inequalities to study the resources required to reproduce quantum correlations. Such resources should involve all parties at hand, as highlighted by the inequality proposed by Svetlichny back in 1987~\cite{Svetlichny}, first violated in 2009~\cite{Lavoie09}. Considering a bipartite situation, Bacon and Toner derived in 2003 some inequalities satisfied by all correlations that can be reproduced with shared randomness (as standard Bell inequalities) augmented by one single bit of communication~\cite{Bacon03Aux}. The fact that their two inequalities could not be violated by two entangled qubits motivated them to find a model of maximally entangled pairs of qubits using a single bit of communication, the nowadays famous Toner-Bacon model~\cite{Toner03Model}. A bit later Brunner and Gisin found inequalities valid for all correlations that can be simulated with one PR box~\cite{Brunner06}; this shows that some correlations corresponding to very partially entangled pairs of qubits can definitively not be simulated with a single PR-box, though the problem remains open both for medium entangled qubits and for the case of a single bit of communication and arbitrary entanglement.
 
Recently, Bell-like inequalities were also used in several contexts worth mentioning. The first context is the one of Entanglement Witnesses (EW). It is well-known that any violation of a Bell inequality witnesses entanglement (at least according to today's physics). Conversely, in the bipartite case, all EW written in a form independent of explicit observables -- that is written in a device-independent manner -- are also Bell inequalities. Hence, for two parties Bell inequalities are equivalent to Device-Independent Entanglement Witnesses (DIEWs). But for more parties this is no longer true: all Bell inequalities are not DIEWs~\cite{diew}, see also the recent experimental demonstration~\cite{Barreiro}. Second, in the context of randomness analysis, Bell inequalities can certify intrinsic randomness~\cite{ColbeckThesis,Pironio10,Colbeck12,Gallego13}. Third, the tool of Bell-like inequalities can be used to study hypothetical models of quantum correlations based on ``hidden influences'' propagating at finite-but-supraluminal speeds~\cite{Bancal12Hidden}. Finally, in the context of self-testing, violation of a Bell inequality can provide certification for the proper behavior of a device without relying on previous calibration~\cite{MayerYao04, Reichardt13, Yang14}. These recent developments show the relevance of finding a common language for our community to discuss its findings, as presented in this paper.

While it is quite straightforward to write down a Bell inequality, a number of parameters make this writing not unique. Thus, two inequalities with similar properties can look superficially very different. This degeneracy can hide obvious facts, and thus constitutes a practical obstacle in the study of Bell inequalities. As an example, the inequality A1, given in 2012 by Grandjean et al~\cite{Grandjean12}, is equivalent to an inequality published 8 years before as Eq. 4 in~\cite{Acin04}, yet this fact was not noticed at the time of publication.

We present here a scheme to deal with these redundancies, which allows each family of inequalities to be referenced by its index in a list of canonical inequalities. An open-source library implements our scheme. It can be freely used by researchers to automate the computations. Thanks to this tool, we launched a growing interactive library of Bell inequalities, available at the URL \textit{faacets.com}.

Our paper is structured as follows: we first clarify in Section~\ref{sec:bell-stuff} the concept of Bell inequalities and Bell expressions, before describing in Section~\ref{sec:redundancies} several degeneracies that can appear in the description of a Bell inequality. In Section~\ref{sec:cure}, we show how to remove with each of these degeneracies individually. This leads us to propose in Section~\ref{sec:decomposing} a method to decompose Bell inequalities into a canonical form. In Section~\ref{sec:compendium}, we describe the tools we are making available to decompose and classify Bell inequalities.

%
%
%
%

\section{Bell scenarios, Bell-like inequalities and oriented Bell expressions} 
\label{sec:bell-stuff}
In a Bell experiment, $n$ parties each hold a system that they measure successively with one of several measurement settings, each time recording one out of several possible measurement outcomes. In general, the number of available measurements might differ from one party to another one, just like the number of outcomes that these measurements can produce. These numbers of measurement settings and outcomes, together with the number of parties taking part in the experiment, define a \textit{Bell scenario}. For simplicity, we consider in the main text that all parties have the same number of possible settings and outcomes, respectively $m$ and $k$. We call these scenarios \textit{homogeneous}, and refer to them with the triple $(n,m,k)$. Except for one additional step that needs to be taken into consideration (c.f. Appendix~\ref{app:nonhomogeneous}), all the results contained in the main text extend straightforwardly to non-homogeneous scenarios.

\begin{figure}
\includegraphics[width=0.2\textwidth]{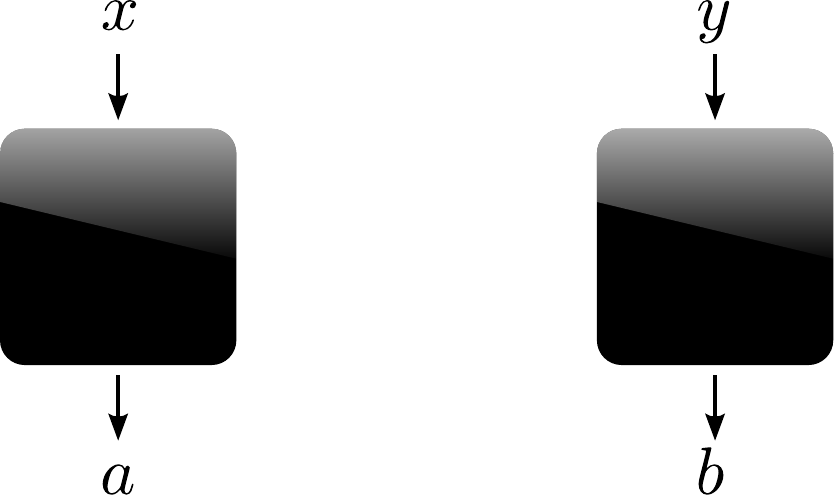}
\caption{Schematic representation of a typical bipartite Bell-type experiment: two parties have access to individual systems, which they can  probe with different measurement settings (as indexed by $x$ and $y$). Upon measurement, the systems provide outcomes $a$ and $b$. The estimation of the conditional probability $P(ab|xy)$  characterizing the response of the devices requires no knowledge of their inner workings~\cite{rmp}.}
\label{fig:setup}
\end{figure}

In a situation in which only information about the measurement settings and outcomes used by the parties is available, the conditional probability $P(ab|xy)$ with which outcomes $a$ and $b$ of the different parties (here two parties) are observed when they use measurement settings $x$ and $y$ respectively (see Figure~\ref{fig:setup}), is of particular interest. These probabilities (or \textit{correlations}) can indeed be estimated in principle simply by repeating the experiment a sufficient number of times, and without further assumption on the measured systems or measurement procedure (i.e. in the so-called device-independent manner~\cite{rmp,introThesis}). These correlations form a list of $D=(mk)^n$ real numbers that can be conveniently represented as a vector or point in the vector space $\Omega=\mathbb{R}^D$.

Properties of these points can be highlighted with the aid of \textit{Bell expressions}, i.e. linear forms 
\begin{equation}
\label{eq:cabxy}
B(P) = \sum_{abxy} \c_{abxy} P(ab|xy).
\end{equation}
Any such expression can be defined by its coefficients $\c_{abxy}$, which also form a vector in the dual vector space $\Omega^*=\mathbb{R}^D$.

A Bell expression taking a definite value $B(P)=v$ defines a hyperplane in the space of correlations which divides the space into two distinct regions. Such hyperplane can thus always be used to demonstrate that a point $P^*$ does not belong to some convex set $S\subset\Omega$, whenever it is the case~\cite{textbook}. This is conventionally done by writing a \textit{Bell-like inequality} $B(P)\leq \beta\ \forall P\in S$ with bound $\b$, and showing that the inequality is violated for the considered point of probabilities, i.e. $B(P^*)>\b$.

A set of particular interest in the space of conditional probabilities is the local set, given by all correlations which can be decomposed as
\begin{equation}\label{eq:locality}
P(ab|xy) = \int \rho(\lambda) P_A(a|x,\lambda) P_B(b|y,\lambda) d\lambda,
\end{equation}
where $\rho(\lambda)$ is positive and normalized~\cite{rmp}. Inequalities satisfied by this set are referred to as \textit{Bell inequalities}. Since this set is a polytope~\cite{rmp}, it can be described with a minimal number of such inequalities: the facets of this polytope. These Bell inequalities are thus of special interest.

Note that with respect to a given convex set, every Bell expression $B$ can give rise to two Bell-like inequalities: one bounding the expression from below, and one from above. In general, these the two inequalities can have different natures. For instance, one might be a facet of the local polytope, while the other one is not (c.f. Figure~\ref{fig:froissart}). We thus wish to distinguish between these two inequalities.

\begin{figure}
\includegraphics{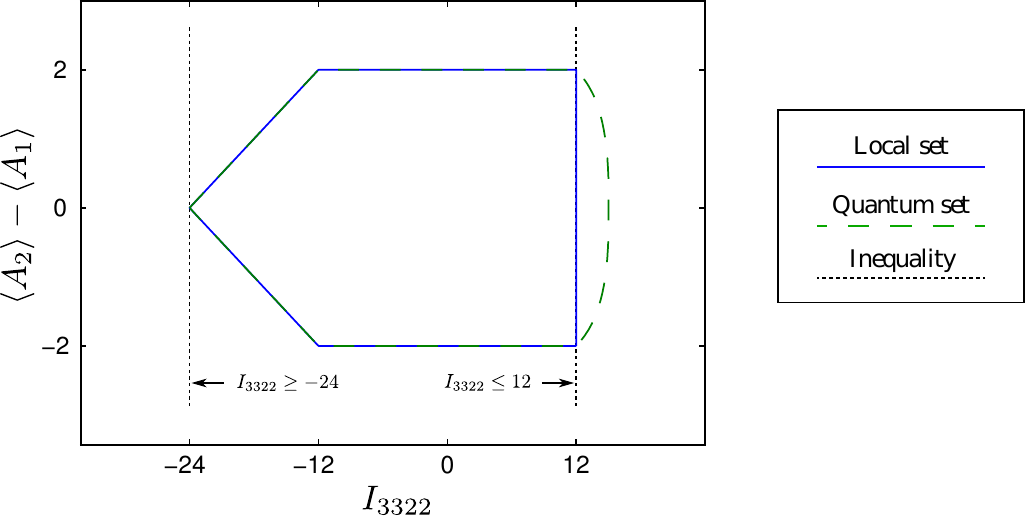}
\caption{Projection of the local (full line) and quantum (dashed line) sets of correlations in a particular two-dimensional probability subspace for the $(2,3,2)$ scenario. The horizontal axis is the value of the $I_{3322}$ Bell expression~\cite{Froissart81,faacetsI3322}. This Bell expression has different properties with respect to the sets displayed here depending on the orientation considered. On the right side of the figure, the $I_{3322}$ expression gives rise, through an upper bound, to a facet of the local polytope (dotted line), which admits a quantum violation. In the opposite direction, however, the same expression (dotted line) gives not rise to a facet, and shows no quantum advantage.}
\label{fig:froissart}
\end{figure}

At the same time, several other sets of correlations than the local one, and their associated Bell-like inequalities, have proven useful in various contexts (e.g. to demonstrate genuine multipartite nonlocality or entanglement, non-simulability, etc~\cite{Svetlichny,diew,Bacon03Aux}). A given Bell expression can then have various upper and lower bounds of interest. When comparing two inequalities given by coefficients $\c_{abxy}$, $\c'_{abxy}$ and, say, upper-bounds $\b$, $\b'$, it is thus important to recognize not only if they describe the same Bell-like inequality, but also if they represent the same expression with different bounds.

We thus wish to identify identical expressions with different bounds, but distinguish between expressions bounded from above or from below. For this purpose, we define an \textit{oriented Bell expression} to be the combination of a Bell expression $B$ with an inequality sign. By convention, we choose to consider only oriented Bell expressions bounded from above, i.e. with inequality sign ``$\leq$''. All upper bounds on $B$ can then be seen as properties of this object, while lower bounds, which are upper bounds on the Bell expression $-B$, are dissociated from it. Geometrically speaking, an oriented Bell expression can be understood as describing a direction of interest in the space of correlations. Opposite directions can have different properties (c.f. Figure~\ref{fig:froissart}).

Note that, since it describes a direction in probability space, an oriented Bell expression needs not come with a bound per se. At the same time, an oriented Bell expression with bounds can be understood simply as a collection of Bell-like inequalities. With respect to one chosen set of correlations, a Bell expression has a unique tight upper bound, so it only gives rise to one tight Bell-like inequality.

\section{Degeneracies in the description of Bell-like inequalities}
\label{sec:redundancies}
The sets of correlations one can wish to consider in a Bell experiment typically satisfy several constraints. Each constraint introduces freedom in the way that a Bell inequality can be written while yet performing the same test. Here we describe such constraints and the degeneracies they induce on the description of Bell-like inequalities. In the next section we propose a solution to lift these degeneracies.

All the constraints considered here are satisfied by the local, quantum and no-signaling sets of correlations.

\subsection{The normalized no-signaling subspace}
\label{sec:redundancies:no-signaling}
Maybe one of the most trivial constraints expected to be satisfied by all physical probabilities is that they are normalized. This can be expressed as
\begin{equation}\label{eq:normalization}
\sum_{ab}P(ab|xy)=1\ \forall x,y
\end{equation}
in the bipartite case (which we use by default in the rest of this paper for the sake of the example). Whenever the probabilities one wants to consider satisfy this constraint, we can rewrite any Bell expression in infinitely many different ways. As an example, consider the positivity constraint $-P(11|11)\leq 0$, with coefficients $\c_{abxy}=-\delta_{a1}\delta_{b1}\delta_{x1}\delta_{y1}$. The coefficients $\c'_{abxy}=\c_{abxy}+K(\delta_{x1}-\delta_{x2})$ define the same inequality through $\sum_{abxy}\c'_{abxy}P(ab|xy) \leq 0$, for any $K\in\mathbb{R}$. Therefore we see a first degree of freedom in the way one can write the simple positivity constraint.

Apart from considering normalized probabilities, we may also wish to restrict our attention to correlations satisfying the no-signaling condition. All correlations predicted by quantum theory indeed satisfy this constraint, and together with the normalization conditions, these constraints define the smallest affine probability subspace $\omega\subset\Omega$, of dimension $d=(1+m(k-1))^n-1$, which contains the set of quantum correlations. These conditions take the form
\begin{equation}\label{eq:nosignaling}
\sum_b P(ab|xy) = P(a|x)\ \forall y,
\end{equation}
and similarly for the sum over the first party's outcomes. Just like with normalization, the no-signaling condition again defines infinitely many variations in the way one can write a given Bell expression.

Let us point out an example of two famous inequalities which are equivalent over all normalized no-signaling probabilities: the CHSH inequality~\cite{chsh} was first described as
\begin{equation}\label{eq:chshinequality}
|E_{11}-E_{12}| \leq 2 - E_{21} - E_{22}
\end{equation}
where $E_{xy}=\sum_{a,b=1}^2 (-1)^{a+b}P(ab|xy)$ is a correlation function. Choosing sign $+$ for the absolute value, this inequality can be described by the coefficients $\c^\text{CHSH}_{abxy}=(-1)^{a+b+x(y+1)}$ and bound $\b^\text{CHSH}=2$. On the other hand, the CH inequality~\cite{ch} reads
\begin{equation}
\label{eq:chinequality}
P(11|11)-P(11|12)+P(11|21)+P(11|22)-P_A(1|2)-P_B(1|1) \leq 0
\end{equation}
where $P_A$ and $P_B$ are the parties' marginal probabilities. Writing $P_A(a|x)=\sum_bP(ab|x1)$ and $P_B(b|y)=\sum_aP(ab|1y)$, this expression can be described by the coefficients
\begin{equation}
\label{eq:chcoeffs} \c^\text{CH}_{abxy}=-\delta_{a2}\delta_{b1}\delta_{x1}\delta_{y1}-\delta_{a1}\delta_{b2}\delta_{x2}\delta_{y1}-\delta_{a1}\delta_{b1}\delta_{x1}\delta_{y2}+\delta_{a1}\delta_{b1}\delta_{x2}\delta_{y2},
\end{equation}
and bound $\b^\text{CH}=0$. These coefficients and bounds are clearly different from the CHSH ones and so do not appear to be related to those of CH at first sight. However, adding the following expression to it:
\begin{equation}
\big [ ( \delta_{x1} - \delta_{x2}  )(1 + 2(\delta_{b1} - \delta_{b2}) - (\delta_{y1} - \delta_{y2}) ) +  ( \delta_{y1} - \delta_{y2}  )(1 + 2(-1)^x(\delta_{a1} - \delta_{a2})) \big ]/8 = \c^\text{CHSH}_{abxy}/4 - \c^\text{CH}_{abxy} - \frac{1}{8},
\end{equation}
which vanishes for probabilities satisfying~\eqref{eq:nosignaling}, reveals the affine transformation that relates both expressions. The two inequalities thus define a test along the same hyperplane for all normalized no-signalling correlations.

Here we choose to consider as equivalent inequalities which act identically on the space of normalized no-signaling correlations. We thus wish to eliminate this kind of redundancies. Note that this might not be desired in some specific situations in which, for instance, communication between the parties is allowed.

\subsection{The relabelings}
\label{sec:redundancies:relabelings}

In a Bell experiment, it is often the case that no preferential importance is attached to any particular party, measurement setting or outcome. Indeed, the value of a particular party, measurement setting or outcome is often used simply as a label, attributed for the sake of distinguishability, but with a level of arbitrariness. Therefore, any permutation of parties, settings or outcomes which is compatible with the Bell scenario transforms probabilities $P(ab|xy)$ into $P'(ab|xy)$ which can be obtained from the same data, by relabeling the parties, settings or outcomes. In turn, the same permutation can be applied to any Bell expression whenever the considered set of correlations is also invariant under such permutations.

As an example, consider the inequality $P(12|11)\leq 1/2$. It describes a different half-space than $P(11|11)\leq 1/2$ in $\omega$. Yet, any experiment whose correlations violate one of these inequalities can also violate the other one if the labels of Bob's outcomes $1$ and $2$ are attributed in an opposite manner (say $1$ to horizontal photon polarization and $2$ to vertical polarization instead of the opposite, e.g.).

Since most sets one is concerned with are invariant under relabeling of parties, settings and outcomes, we wish to analyse these inequalities independently of such relabelings. All sets of correlations need not satisfy this constraint, though (see~\cite{Woodhead} for an example).

\subsection{Superfluous parties, settings or outcome distinction}
\label{sec:redundancies:superfluous}

Having considered conditions that apply to sets of correlations in a fixed Bell scenario, we now consider conditions that one can expect to hold in the relation between different scenarios. In this context, we refer to a rule that generates sets of correlations for various scenarios as a \textit{model}. For instance, the local model, defined by Eq.~\eqref{eq:locality}, generates distinct sets of correlations in each Bell scenario.

All the constraints considered here are again satisfied by the local, quantum and no-signaling sets of correlations.

\subsubsection{Superfluous parties}
First, consider an inequality which doesn't involve certain parties, even though they are available in the considered scenario. This would be the case if one were to test the CHSH inequality in a tripartite experiment for instance. Clearly, the coefficients of this inequality are not identical to the ones of the CHSH inequality for two parties (they even belong to different spaces). Yet, the test performed is arguably physically identical. One is thus tempted to neglect the third irrelevant party from the scope, and rewrite the inequality in a bipartite scenario only. This simple operation brings us back to analyze an inequality in a simpler scenario.

While this operation sounds trivial, it is justified to carry the bound of the inequality through it only when the respective sets of correlations defined by the tested model in both scenarios satisfy some constraints. Namely, they must be such that the set $\mathcal{P}_{n',n}$ of $n$-partite correlations produced by the model in an $n'$-partite situation, where $n'>n$, coincides with the set $\mathcal{P}_{n}$ of $n$-partite correlations it produces in presence of $n$ parties, i.e.
\begin{equation}
\mathcal{P}_{n',n} = \mathcal{P}_n.
\end{equation}
This is of course the case for most sets of interest. We thus wish to neglect parties of a Bell scenario which do not intervene in a Bell test.

\subsubsection{Superfluous measurement settings}
When the value of a Bell expression does not depend on which result a party chooses to output for some setting $x^*$, i.e. $c_{abx^*y} = c_{a'bx^*y}\ \forall a,a',b,y$, then it should be clear that an experiment evaluating it could in principle be achievable without using this measurement setting $x^*$ at all. Thus, we also consider removing such setting to simplify the scenario. Again, this is valid whenever the tested model satisfies the condition
\begin{equation}
\mathcal{S}_{m',m} = \mathcal{S}_m,
\end{equation}
where $\mathcal{S}_m$ is the set of correlations produced by the considered model with $m$ possible settings, and $\mathcal{S}_{m',m}$ the set achieved when $m'>m$ settings are used, but the statistics of the $m'-m$ additional settings are neglected.


\subsubsection{Superfluous outcome distinction}
When no setting is superfluous, but yet two outcomes play the same role in a Bell-like inequality, i.e. $\exists$ $a_1 \ne a_2$, $x^*$, such that $c_{a_1bx^*y}=c_{a_2bx^*y}$ for all $b,y$, one might as well not distinguish between them and just assign a single outcome for both cases. Indeed, arbitrarily many different inequalities can be generated by increasing the number of outcomes which together share the same probability weight, while the test performed by the corresponding inequality remains the same because it does not distinguish between them. We thus wish to avoid such degeneracy as well. This is possible whenever the considered set satisfies
\begin{equation}
\mathcal{O}_{k',k} = \mathcal{O}_k,
\end{equation}
where $\mathcal{O}_k$ is the set of correlations produced by the considered model with $k$ possible outcomes, and $\mathcal{O}_{k',k}$ the set achieved when the model can produce $k'>k$ outcomes, but some are grouped together to form only $k$ of them.

\subsubsection{Note on liftings}
As we just argued, the bound of an inequality is unaltered in presence of superfluous parties, inputs or output distinction, provided the tested model satisfies the corresponding constraint. In some cases, however, more can be said about the relationship between inequalities created by adding artificial parties, settings or outcomes distinction.

One such observation was presented in~\cite{liftings}, where it was shown that the property of an inequality being (or not) a facet of the local polytope is preserved when adding irrelevant settings or distinction between outcomes, an operation also known as \textit{lifting} an inequality. This property is however not kept when extending the experiment to another party in the way we just described (c.f. Figure~\ref{fig:liftings}). Rather, lifting an inequality to more parties in such a way that its facet property is preserved (with respect to the local polytope) can be accomplished by conditioning the test that this inequality performs to some outcome observed by the additional parties~\cite{liftings}. This operation is a special case of the one we introduce now.

\begin{figure}
\includegraphics[width=0.5\textwidth]{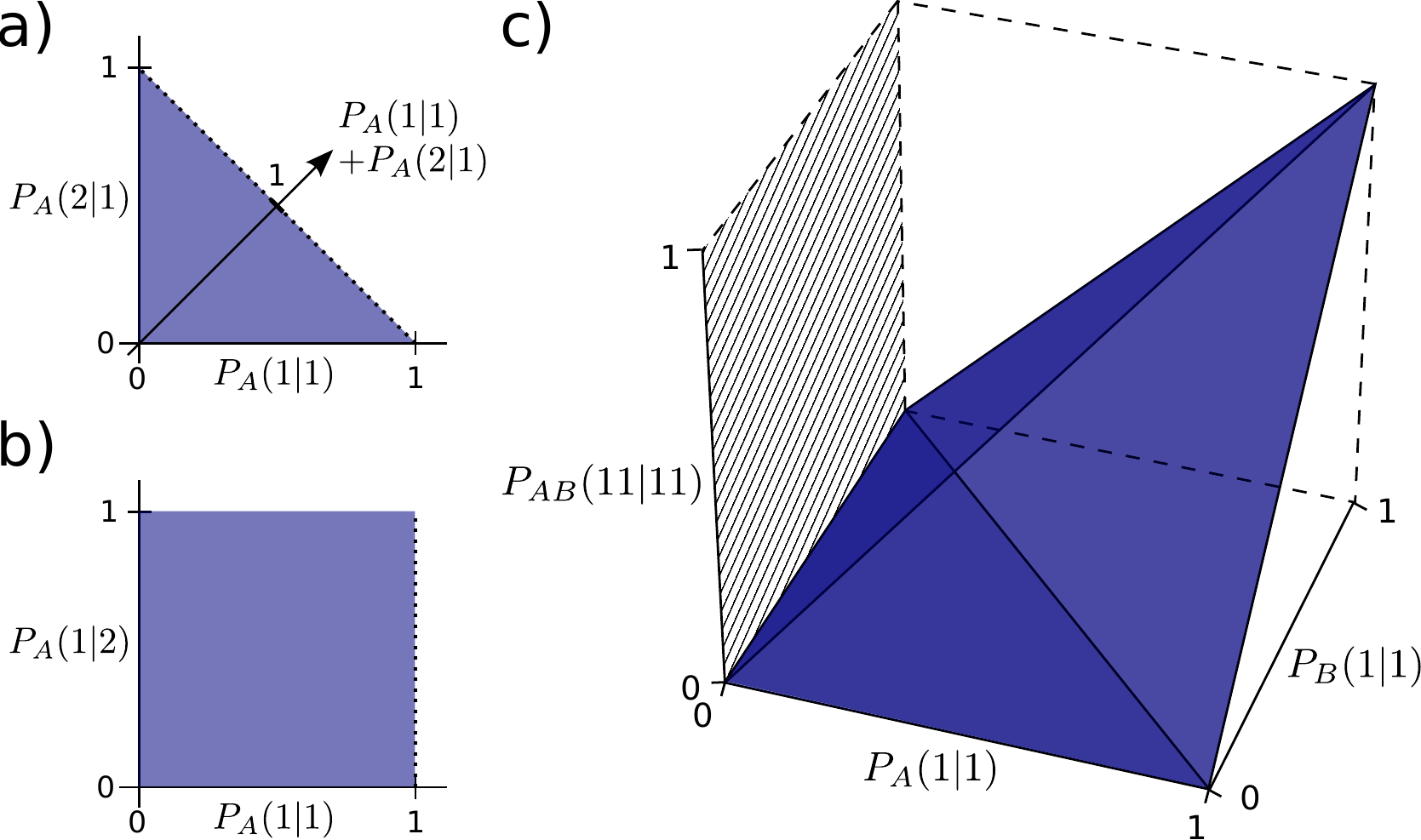}
\caption{Illustration of i/o-liftings and inequality composition. This figure shows the effect of adding (a) an output distinction, (b) an input or (c) a party in the space of correlations, for the simplest (trivial) scenarios. Allowed correlations are within the shaded areas/volumes. (a) When distinguishing between outcomes 1 and 2, the boundary of the set of possible correlations is given by the boundary of the 1-outcome set corresponding to grouping the two outcomes into one, plus the usual positivity conditions. (b) When adding one input, the same inequalities as previously still define the boundary of the new set. (c) Adding a party necessarily adds more than one dimension to the probability space. The boundary of the new set involves nontrivial combinations of all variables. In this case the new inequalities can be seen as \textit{compositions} of the initial boundaries of the respective 1-party sets. The boundary of the 1-party set expressed in the new space (dashed here) is not a facet of the new one (shaded).}
\label{fig:liftings}
\end{figure}

\subsection{Composite inequalities}
\label{sec:redundancies:composite}

Here, we describe a way of adding parties in a Bell scenario which preserves the facet property for the local polytope. The inequality obtained is tight for all models satisfying the constraint given below. This immediately implies that the local, quantum and no-signaling bounds are inherited from this construction.

As mentioned, adding a passive party to a Bell test doesn't result in an optimal test in the new extended scenario (see figure~\ref{fig:liftings}), but conditioning the test to an outcome of the additional party does, and is known as a lifting~\cite{liftings}. Let us thus consider models which satisfy the constraint that the correlations they produce in an $n$-partite scenario coincide with the $n$-partite correlations produced in an $n'$-partite scenario, with $n'>n$, whenever these correlations are conditioned to the $n'-n$ remaining parties outcomes. We denote this condition as:
\begin{equation}\label{eq:conditionality}
\mathcal{P}_{n',n|n'-n}=\mathcal{P}_n.
\end{equation}
In other words, this condition states that any $n$-partite correlations can be prepared in an $n'$-partite scenario upon heralding from the results of measurements performed by the $n'-n$ remaining parties; and that any such preparation produces valid $n$-partite correlations.

We show in Appendix~\ref{compositeAppendix} that if two expressions with coefficients $\c_{abxy}$ and $\c'_{cdzw}$ are bounded above by $\b_+$ and $\b'_+$ and below by $\b_-$ and $\b'_-$ for such models (satisfying \eqref{eq:conditionality}), then the tensor product of the two inequalities satisfies:
\begin{equation}\label{eq:prodExpression}
\sum_{abcdxyzw}\c_{abxy}\c'_{cdzw}P(abcd|xyzw) \leq \max(\b_- \b'_-,\ \b_- \b'_+,\ \b_+ \b'_-,\ \b_+ \b_+).
\end{equation}
As an illustration, consider bounding the tensor product of two CHSH expressions with respect to the local set. The corresponding expression reads
\begin{equation}\label{eq:CHSHCHSH}
\sum_{abcdxyzw}\c_{abxy}^{CHSH}\c_{cdzw}^{CHSH}P(abcd|xyzw)=\sum_{abxy}\c_{abxy}^{CHSH}P(ab|xz)\sum_{cdzw}\c_{cdzw}^{CHSH}P(cd|zw,abxy)
\end{equation}
where we use the fact that correlations are no-signaling. Since the set of local correlations satisfies~\eqref{eq:conditionality}, one has that $\sum_{cdzw}\c_{cdzw}^{CHSH}P(cd|zw,abxy)=\sum_{cdzw}\c_{cdzw}^{CHSH}P(cd|zw)$, which is bounded between $-2$ and $2, \ \forall\ a,b,z,w$. The local bound of~\eqref{eq:CHSHCHSH} is thus $2\cdot 2=4$ in agreement with~\eqref{eq:prodExpression}.

In Appendix~\ref{compositeAppendix} we also show that any two Bell-like inequalities defining facets with respect to a model can be compose to produce a new facet for this model. Thus arbitrarily many Bell inequalities can be generated again in this way by composing Bell inequalities involving fewer parties. Since this construction carries through significant properties for the local, quantum and no-signaling sets, and provides insight into inequalities that can be seen as products of simpler expressions, we choose also not to consider composite inequalities as canonical. This includes not considering liftings of an inequality to more parties as fundamental, since these can be seen to be compositions with the positivity constraints (construction~\eqref{eq:prodExpression} generalizes to compositions involving different number of parties.). Note that expressions with superfluous parties are also composite, as compositions with a constant. They are thus also detected as non-canonical here.

Examples of properties which are not inherited from rule~\eqref{eq:prodExpression} include for instance Svetlichny and biseparable bounds~\cite{Svetlichny,diew}, because their corresponding sets don't satisfy~\eqref{eq:conditionality}.

\section{Removing the degeneracies}
\label{sec:cure}
We now describe a cure for each degeneracies mentioned in the last section. Taken together, this allows us to identify Bell-like inequalities and expressions independently of any such degeneracy. In the next Section~\ref{sec:decomposing} we will use this to define families of Bell-like inequalities and write down a decomposition for any oriented Bell expression in terms of canonical representatives.

\subsection{The normalized no-signaling subspace}
\label{sec:cure:no-signaling}

An original way to deal with the degeneracies induced by the normalization and no-signaling conditions was provided in~\cite{Collins}: it consists in parametrizing the space of probabilities with joint and marginal probabilities, but without monitoring the last outcome. Every coefficient involving the last outcome of some parties can then be computed from the normalization and no-signaling condition, and any Bell expression is described by a unique set of coefficients. We refer to this as the Collins-Gisin parametrizations. While it solves part of the problem, this solution does not treat all outcomes similarly. As a result, computing the effect of relabelings that involve the last outcomes require the use of arithmetic, and this would complicate significantly the search for a particular representative under relabelings.

To avoid this complication, we choose instead to keep all the coefficients in \eqref{eq:cabxy}, such that relabeling parties, settings or outcomes only amounts to permuting the coefficients of $\c$. The normalization and no-signaling redundancies can then be eliminated in a way which is compatible with these permutations by choosing a parametrization of Bell expressions acting on normalized no-signaling subspace $\omega$ in $\Omega$ which treats all parties, settings and outcomes on an equal footing.

\subsubsection{A complete basis symmetric under relabelings}
\label{sec:cure:no-signaling:basis}

To construct such a parametrization, we start by identifying components of the dual space $\Omega^*$ that are symmetric under relabellings and capture the normalization and no-signaling conditions. We eventually wish to extend our construction to an arbitrary number of parties, so we choose to consider bases that are tensor across the parties. We thus only need to define a basis for single parties. In these terms, the normalization \eqref{eq:normalization} can be written as:
\begin{equation}
\label{eq:normalizationindec}
\sum_{axby} \mu_{ax} \mu_{by} P(ab|xy) = 1, \quad \mu_{ax} = \mu_{by} = \frac{1}{m}.
\end{equation}
Here the value of the components $\mu_{ax}$ is fixed by the constraint that it must be symmetric under permutation of the inputs. We thus have isolated the component of $\Omega^*$ which will encode the normalization. Adding $\kappa$ times $\mu_{ax}\mu_{by}$ to a Bell expression shifts its value by the constant $\kappa$ on all normalized correlations.

We now proceed with the no-signaling equations \eqref{eq:nosignaling}, which can be rewritten as:
\begin{equation}
\label{eq:nosignalingindec}
\begin{split}
&\sum_{by} u_{by} \sum_{ax} \nu^\xi_{ax} P(ab|xy) = 0,\ \forall\ u_{by}, \quad \nu^\xi_{ax} = \delta_{x, \xi} - \delta_{x, \xi+1}, \quad \xi=1\ldots m-1,
\end{split}
\end{equation}
and similarly for no-signaling from Bob to Alice. This time, the individual components $\nu^\xi_{ax}$ are not invariant under permutation of settings. However, one can verify that the subspace $\text{span}(\{\nu^\xi_{ax}\}_\xi)$ they generate is invariant: any permuted $\nu'_{ax} = \delta_{x,\xi'} - \delta_{x,\xi''}$ can be re-expressed as a linear combination of $\nu^{\xi}_{ax}$ only.

Together, $\{ \nu^\xi_{ax} \}$ and $\mu_{ax}$ define a basis for an $m$-dimensional subspace of $\Omega^*$ which is invariant under permutations of settings and outcomes for Alice. To form a complete basis, $m(k-1)$ additional elements $\{ \lambda^i_{ax} \}_{i=1...m(k-1)}$ are needed, and the subspace spanned by these elements should also be invariant under relabelings.  A simple form for these elements is:
\begin{equation}
\label{eq:lambda}
\lambda^{\zeta \xi}_{ax} = \delta_{x,\xi} \left ( \delta_{a,\zeta} - \delta_{a,\zeta+1} \right )
\end{equation}
for $\xi=1...m$ and $\zeta=1...k-1$. This choice generalizes the correlators already used in the literature in the case of scenarios with binary outcomes.


A bipartite Bell expression can then be expressed in terms of a complete basis $\{u\} = \{ \mu \} \cup \{ \lambda^{\zeta \xi} \}_{\zeta=1...k-1, \xi=1...m} \cup \{ \nu^\xi \}_{\xi=1...m-1}$ as 
\begin{equation}
c_{abxy} = \sum_{ij=1}^{mk} \gamma_{ij} u^i_{ax} u^j_{by},
\end{equation}
where $\gamma_{ij}$ are the components of the expression in the new symmetric basis.

\begin{figure}
\includegraphics{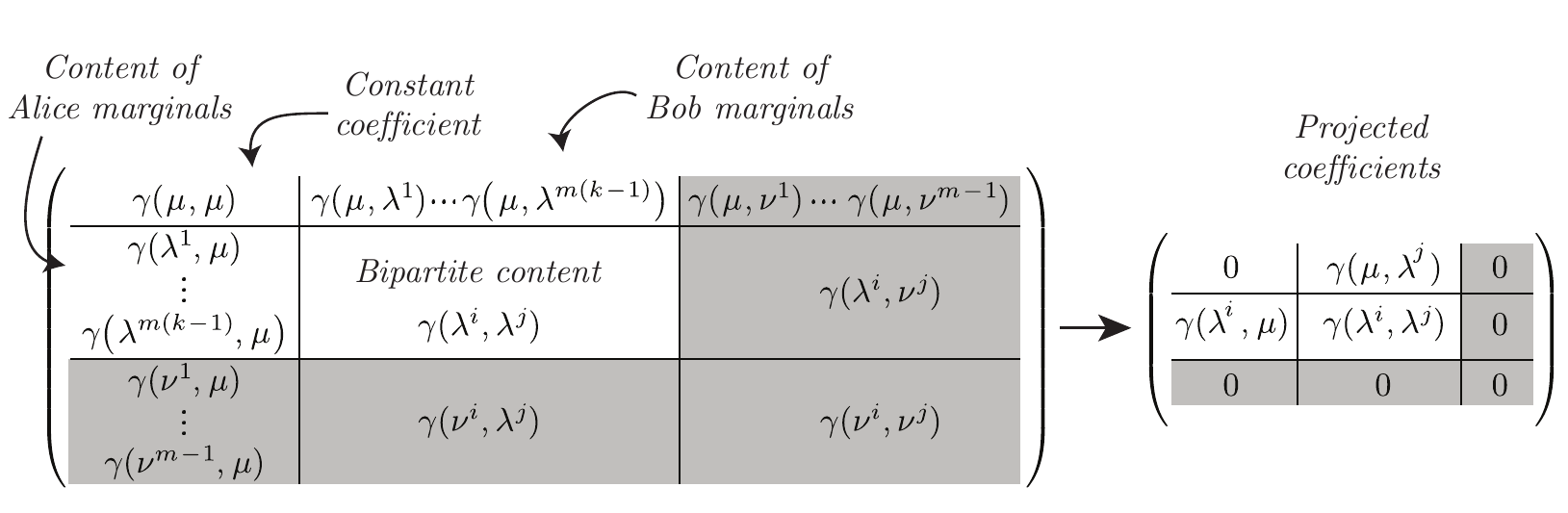}
\caption{Matrix of coefficients for the Bell expression $\c_{abxy}$ expressed in the basis that isolates no-signalling terms, and is compatible with the Bell permutations. We write $\gamma(u,v)$ the coefficient $\gamma$ corresponding to the basis elements $u$ and $v$. Terms describing marginals and no-signaling conditions are isolated. The latter are shaded in gray. The projection shifts the normalization coefficient to the bound, and removes coefficients corresponding to no-signaling constraints.}
\label{fig:coefficientsmatrix}
\end{figure}

Since adding any terms of the form $\nu^\xi_{ax}u^j_{by}$ or $u^i_{ax}\nu^\xi_{by}$ to the expression does not change its value for no-signaling correlations, we can always choose to add such terms so that their coefficients disappear in the expression (c.f. Figure~\ref{fig:coefficientsmatrix}). Similarly, adding $\mu_{ax}\mu_{by}$ shifts the whole expression by a constant. We thus define an expression independently of such shifts by setting this element to zero. Re-expressing the projected coefficients in terms of probabilities, this defines new coefficients $\overline{\c}_{abxy}$ which are not subject to normalization or no-signaling redundancies, and behave well under permutations or parties, settings and outcomes:
\begin{equation}\label{eq:cbar}
\overline{\c}_{abxy} = \left(\sum_{ij=1}^{m(k-1)+1} \gamma_{ij}  u^i_{ax} v^j_{by} \right) - \gamma_{11} \mu_{ax} \mu_{by}.
\end{equation}
The inequality $\sum_{abxy} \c_{abxy} P(ab|xy) \le \beta$ can then be rewritten as:
\begin{equation}
\label{eq:afterprojection}
\sum_{abxy} \overline{\c}_{abxy} P(ab|xy) \le \beta - \gamma_{11}.
\end{equation}

A last degeneracy is given by multiplying both sides of \eqref{eq:afterprojection} by some factor. For inequalities with rational coefficients, this can be dealt with by multiplying the coefficients $\overline{\c}_{abxy}$ by a positive number so that the resulting coefficients are integers with greatest common divisor 1. This allows one to identify any oriented Bell expression uniquely within the normalized no-signaling subspace. Moreover, by construction this identification is compatible with Bell permutations: a permutation of parties, settings or outcomes applied on the coefficients $\overline{\c}_{abxy}$ never re-introduces terms of the form $\nu^\xi_{ax}u^j_{by}$, $u^i_{ax}\nu^\xi_{by}$ or $\mu_{ax}\mu_{by}$.

The construction above generalizes readily to multipartite scenarios, by repeating the basis construction for the additional parties. In the Appendix~\ref{app:nonhomogeneous}, we give an example of this construction for a non-homogeneous scenario.

\subsubsection{Example: the CH and CHSH expressions in the normalized no-signaling subspace}
\label{sec:cure:no-signaling:example}

To illustrate our construction, and the uniqueness of the Bell expression it produces, let us take the coefficients $\c^\text{CH}$ for the CH expression~\eqref{eq:chcoeffs}, and decompose them on the basis $u^i_{a|x}$:
\begin{equation}
\{ \vec{u}^i \} = \{ \vec{\mu}, \vec{\lambda}^{11},  \vec{\lambda}^{12}, \vec{\nu}^{1} \} =
 \left \{ \left ( \begin{array}{r} u^i_{1|1} \\ u^i_{2|1} \\  u^i_{1|2} \\ u^i_{2|2} \end{array} \right ) \right \} = 
\left \{
  \left ( \begin{array}{r} 1/2 \\ 1/2 \\  1/2 \\ 1/2 \end{array} \right ),
  \left ( \begin{array}{r} 1 \\ -1 \\ 0 \\ 0 \end{array} \right ),
  \left ( \begin{array}{r} 0 \\ 0 \\  1 \\ -1 \end{array} \right ),
  \left ( \begin{array}{r} 1 \\ 1 \\ -1 \\ -1 \end{array} \right )
\right \},
\end{equation}
and using the same construction for Bob's basis.
Let us take the Bell expression for the CH inequality in~\eqref{eq:chinequality}, and let us decompose:
\begin{equation}
\gamma_\text{CH} = \frac{1}{8} \left( \begin{array}{r|rr|r}
-4  & 0 & 0 & -2 \\
\hline
0  & 2 &-2 & 2 \\
0  & 2 & 2 & -2 \\
\hline
-2 & -2 & -2 & 1
\end{array} \right ), \qquad
\overline{\gamma}_\text{CH} = \left( \begin{array}{r|rr|r}
0  & 0 & 0 & 0 \\
\hline
0  & 1 &-1 & 0 \\
0  & 1 & 1 & 0 \\
\hline
0 & 0 & 0 & 0
\end{array} \right ),
\end{equation}
such that:
\begin{equation}
\label{eq:chshprojected}
B(P) = \sum_{abxy} \overline{\c}_{abxy} P(ab|xy) \le 2, \quad
\overline{\c}_{abxy} = \sum_{ij} \overline{\gamma}_{ij} u^i_{ax} u^j_{by} = (-1)^{a+b+x(y+1)}.
\end{equation}

When considering the CHSH inequality~\eqref{eq:chshinequality}, we observe that the correlators $E_{\xi \upsilon} = \sum_{abxy} \lambda^\xi_{ax} \lambda^\upsilon_{by} P(ab|xy)$, and thus:
\begin{equation}
\gamma_\text{CHSH} = \left( \begin{array}{r|rr|r}
0  & 0 & 0 & 0 \\
\hline
0  & 1 &-1 & 0 \\
0  & 1 &1  & 0 \\
\hline
0 & 0 & 0 & 0
\end{array} \right ),
\end{equation}
which is already in the no-signaling subspace and of the form $\overline{\gamma}_\text{CH}$. The two expressions are thus recognized as equivalent, with coefficients $\overline{\c}_{abxy} = (-1)^{a+b+x(y+1)}$.

\subsection{Relabelings}
\label{sec:cure:relabelings}
Given coefficients $\c_{abxy}$ for an arbitrary oriented Bell expression (for instance as obtained after the projection and renormalization presented above), equivalent representatives can be obtained by relabeling parties, settings or outcomes. For a finite number of parties, settings and outcomes, the number of possible relabelings is finite as well. 
We notice that relabelings permute the coefficients of this vector, defining an orbit in the space $\Omega^*$, the size of this orbit being finite. We choose to select a canonical representative from this orbit by using lexicographic ordering.

To do so, we first define an enumeration of the coefficients $\c_{abxy}$ using a bijection $M:\ (abxy) \longleftrightarrow i $, with $i=1...D = (m k)^n$, i.e. $\vec{\c}_i=M_i^{abxy} \c_{abxy}$. Let $\vec{\c}'$ and $\vec{\c}''$ be two relabelings of the Bell expression $\vec{\c}$; we order them lexicographically by defining:

\begin{equation}
\vec{\c}' <_\text{lex} \vec{\c}'' \Leftrightarrow \exists j \text{ s.t. } \forall i < j, \vec{\c}'_i = \vec{\c}''_i \text{ and } \vec{\c}'_j < \vec{\c}''_j.
\end{equation}

The \textit{minimal} representative is then the first one under the lexicographic ordering $<_\text{lex}$. As the definition depends on the enumeration of coefficients, we prescribe the following: for a quadruplet $(abxy)$, the next element is found by incrementing first Alice's outcome $a$, then by incrementing Alice's setting $x$, then by incrementing Bob's outcome $b$, finally by incrementing Bob's setting $y$ (as seen in Figure~\ref{fig:coefficientsorder}). This defines the bijection $M$ uniquely.
\begin{figure}
\includegraphics{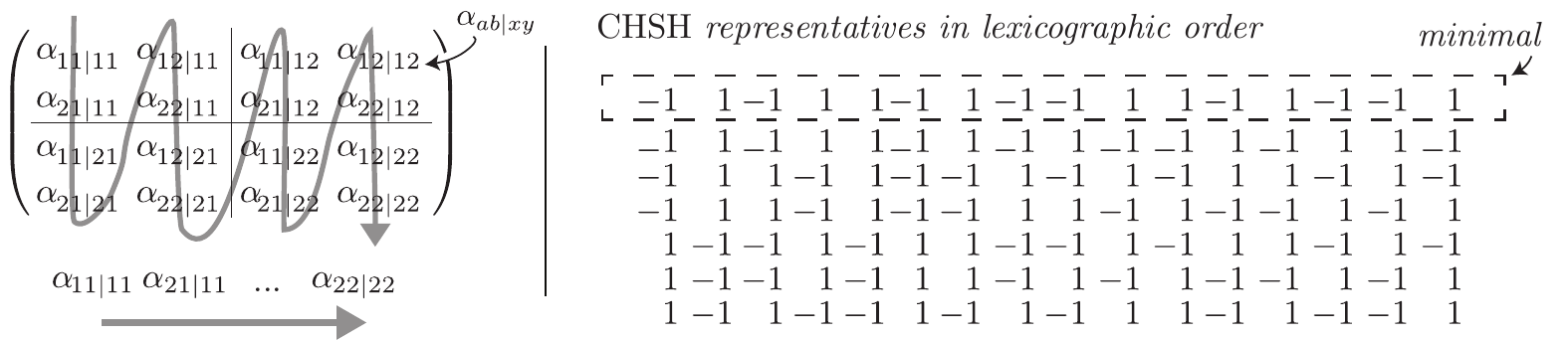}
\caption{Graphical description of the ``column-major'' order of coefficients used in the enumeration for lexicographic ordering, along with a list of the eight representatives of CHSH under relabelings sorted in the lexicographic order. }
\label{fig:coefficientsorder}
\end{figure}
As seen in Appendix~\ref{app:group:faster}, grouping the outcomes and settings of each party individually in the enumeration enables the construction of fast algorithms to select the minimal lexicographic representative. We choose to increment Alice indices first to be compatible with the column-major order used to store multi-dimensional arrays in e.g. MATLAB.

Having defined this order, a particular member of the orbit can be identified either by the permutation that has to be applied to the minimal representative to retrieve the member, or by specifying the rank of this member in the lexicographic order.

Computing the first lexicographic representative of the orbit an inequality belongs to, or computing its rank in the sorted list can be performed, in principle, by enumerating this list completely. However, this approach is not practical for scenarios involving more than a handful number of parties, settings or outcomes. In Appendix~\ref{app:group}, we show how a first representative can be computed quickly by using computational group theory. We also give a fast algorithm to find the index of a representative in the list and obtain the $k^\text{th}$ representative. This allows for the identification of every oriented Bell expression by writing the minimal representative of its orbit, and its rank in the lexicographic order of its orbit.

For the most complex inequalities in~\cite{Avis}, our algorithm can find the first representative by lexicographic order in a list of about $10^{100}$ relabelings in a few seconds on a standard computer.

\subsection{Superfluous parties, settings or outcome distinctions}
\label{sec:cure:superfluous}

As mentioned in~\ref{sec:redundancies:superfluous}, duplicating an outcome, or introducing irrelevant settings or parties produces valid inequalities in scenarios that are larger than strictly needed to express the constraint at hand. In the case of superfluous settings or outcomes distinctions, we call such inequalities \textit{i/o-lifted}. To avoid multiple definitions of equivalent Bell-like inequalities, we write every Bell expression in the smallest scenario in which it is relevant, i.e. such that it is \textit{non-i/o-lifted}.

Note that the coefficients corresponding to irrelevant settings or outcomes playing the same role might not vanish in an expression involving full probabilities. For instance, the positivity $P(1|1) \geq 0$ expressed with one irrelevant input and a duplicate outcome then reads:
\begin{equation}
3P(1|1)-P(2|1)+P(1|2)+P(2|2) \geq 0,
\end{equation}
where all inputs and outcomes intervene. However, they are easily spotted by checking conditions given in section~\ref{sec:redundancies:superfluous}, or by examination of invariance of the inequality under permutation of settings or outcomes. The inequality can then be rewritten in a simplified scenario.

\subsection{Composite expressions}
\label{sec:cure:composite}
Following observation~\eqref{eq:prodExpression}, we say that a Bell expression $\c_{abxy}$ is \textit{composite} if it can be written in a tensor form, i.e. if $\exists$ $\kappa$, $\c'_{ax}$, $\c''_{by}$ such that
\begin{equation}\label{eq:decomp1}
\c_{abxy} =  \kappa \mu_{ax}\mu_{by} + \c'_{ax}\cdot\c''_{by}.
\end{equation}
Otherwise, we call the expression \textit{non-composite}. Here $\mu_{ax}$ and $\mu_{by}$ identify constant terms (c.f. section \ref{sec:cure:no-signaling:basis}).

This decomposition generalizes straightforwardly to cases involving more than one party on each side, and can result in decompositions of the form
\begin{equation}\label{eq:decomp2}
\c_{abcxyz}=\kappa_1 \mu_{ax}\mu_{by}\mu_{cz} + \c'_{ax}(\kappa_2 \mu_{by}\mu_{cz} + \c''_{by}\c'''_{cz}),
\end{equation}
with constants $\kappa_1$ and $\kappa_2$. When $\alpha'$, $\alpha''$ and $\alpha'''$ are non-composite, we refer to them as being the \textit{components} of $\alpha$. We show in Appendix~\ref{compositeAppendix} that when the Bell expressions $\c',\c'',\c'''$ only have rational coefficients, such decomposition is unique up to the sign of the individual expressions. This guarantees that this decomposition can be found by using recursively the following method:
\begin{enumerate}
\item Given a Bell expression $\c$, examine for each separation of the parties into two groups of parties whether a constant $\kappa$ can be added to the expression in order to allow the expression to be written as a tensor product across this separation.
\item When such a biseparation is found, repeat Step 1 for each new Bell expression found. If such biseparation does not exist, the expression is non-composite.
\end{enumerate}

Note that whether a Bell expression is composite or not does not depend on the considered bound, or on the side on which one wishes to bound it. It is thus a property of the expression. Still, in some instances, bounds on an expression can be transmitted through the composition or decomposition operation. We described in section~\ref{sec:redundancies:composite} the condition under which this is possible for composition (proof in Appendix~\ref{compositeAppendix}). Appendix~\ref{compositeAppendix} also describes cases in which a bound $\b$ on an expression like~\eqref{eq:decomp1} can be transmitted to one of its components.

\section{Decomposing Bell-like inequalities in terms of canonical oriented Bell expressions}
\label{sec:decomposing}

In the previous section we showed that any of the following degeneracies in Bell-like inequalities can be dealt with:
\begin{itemize}
\item the orientation ($B(P) \le \b \Leftrightarrow B'(P) \ge -\b$) is fixed by choosing oriented Bell expressions bounded from above,
\item the degeneracy given by the no-signaling constraints is dealt with using the parametrization of Section~\ref{sec:cure:no-signaling:basis},
\item in non-homogenous scenarios, parties and measurements settings are ordered as described in Appendix~\ref{app:nonhomo:canonicalscenarios},
\item the arbitrary constant present in the Bell expression because of the normalization of probability distributions is extracted, and the expression is multiplied by a non-negative factor such that it can be written down using relatively prime integers, as described in Eq.~\eqref{eq:afterprojection},
\item the degeneracy due to relabellings is lifted by looking for the minimal lexicographic representative of the inequality, as described in Section~\ref{sec:cure:relabelings}.
\end{itemize}

Used in this order, each degeneracy removal operation needs only to be applied once. The two operations below can require some of the above operations to be repeated, but only need to be applied a finite number of times:
\begin{itemize}
\item superfluous outcomes or measurements settings are removed by transposing the inequality into a simpler scenario, as described in Section~\ref{sec:cure:superfluous},
\item composite expressions are decomposed as described in Section~\ref{sec:cure:composite}.
\end{itemize}

We say that a non-composite oriented Bell expression is in {\em{canonical form}} when no redundancies are left. Furthermore, we say that two non-composite oriented Bell expressions are {\em{equivalent}} if one can be obtained from the other one by using the transformations above. Any non-composite oriented Bell expression with rational coefficients can thus be described by its canonical form and a transformation associated to each level of degeneracy.

This description also applies to every oriented component of a composite Bell expression. Any composite Bell-like inequality can thus be {\em decomposed} as a combination of the form~\eqref{eq:decomp1} of non-composite oriented Bell expressions, which have, each, one canonical form. Thus, any oriented Bell expression involved in a given Bell-like inequality can be identified in this way. As discussed in Section~\ref{sec:cure:composite}, this combination is in general not unique because of the freedom left in the choice of sign for its components. However, the two orientations of a Bell expression can sometimes be equivalent to each other thanks to some of the degeneracied described above. When this is the case for all components of a composite Bell-like inequality, its decomposition is unique. This is the case for the example given in Figure~\ref{fig:sliwa4}.


\begin{figure}
\includegraphics{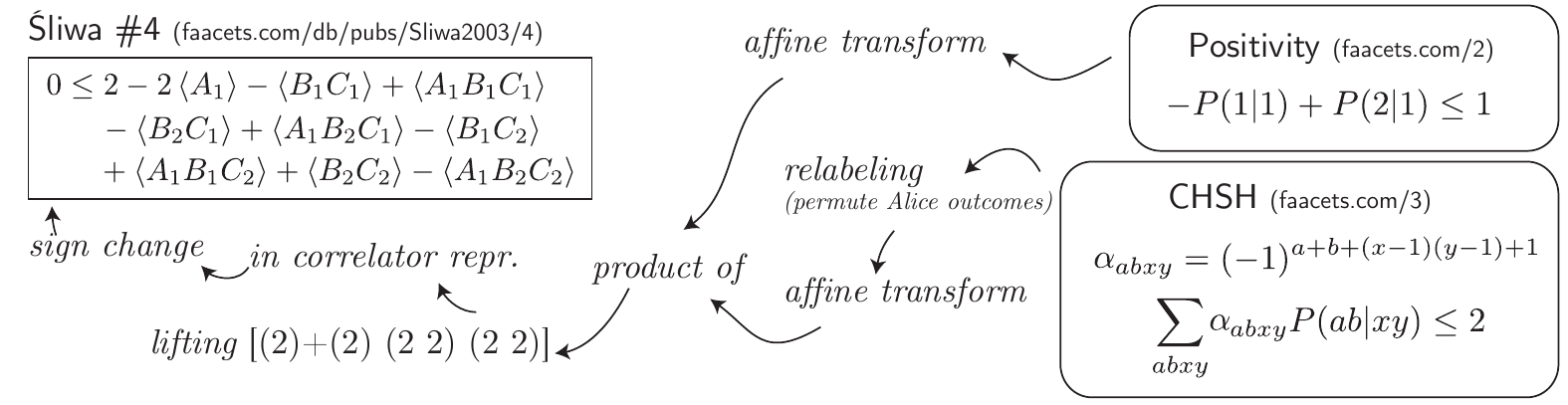}
\caption{Recomposition of a Bell inequality from its decomposition expressed in terms of canonical Bell inequalities, here applied to Sliwa's 4$^\text{th}$ inequality~\cite{Sliwa}. This inequality is a composition of the positivity inequality and of a representative of the CHSH inequality. The URLs link to the entries of the online database.}
\label{fig:sliwa4}
\end{figure}


%
%

\section{A library of oriented Bell expressions}
\label{sec:compendium}

In this paper, we have constructed mathematical and algorithmic tools to deal with all the degeneracies presented. Given that the number of Bell-like inequalities has grown considerably over the last decade, we devised to set up a platform, available both online and offline, to collect the information disseminated over the years in the literature. This platform, named \textit{faacets.com}, contains an implementation of the algorithms described in this work, along with a growing library of inequalities published in peer-reviewed literature. The platform is able to perform decompositions of the kind shown in Figure~\ref{fig:sliwa4} for any Bell expression or Bell-like inequality expressed in terms of rational coefficients. It can also check whether such expression or inequality involves any already-known Bell inequalities or expressions.

After referencing, the platform provides a unique identifier for the canonical form of any non-composite oriented Bell expression. This gives researchers the opportunity to cite a Bell expression by its URL \textit{faacets.com/number}, with the page itself cross-referencing published papers about the expression.

The objects registered in the library are non-composite oriented Bell expressions. They are stored with their properties (which can include a local bound, quantum bound, etc.). Bell expressions which are not invariant under change of orientation, i.e. whose upper and lower bounds might have different properties, can be present twice in the library (once for the lower bound, once for the upper one). This is not the case of CHSH, for instance, which is invariant under change of orientation.

As the number of inequalities is growing, researchers have been providing electronic versions of their results~\cite{Pitowsky,Avis,Vertesi,BancalSym}. To facilitate these exchanges, we have created a human-readable text interchange format for describing Bell inequalities based on YAML~\cite{Yaml}. While the latest specification of this format can be found online~\cite{specfaacets}, an example of a data file can be seen in Figure~\ref{fig:chshyaml}. This format can be edited by hand, parsed using standard YAML tools, and is understood by the software we provide.

\begin{figure}
\includegraphics{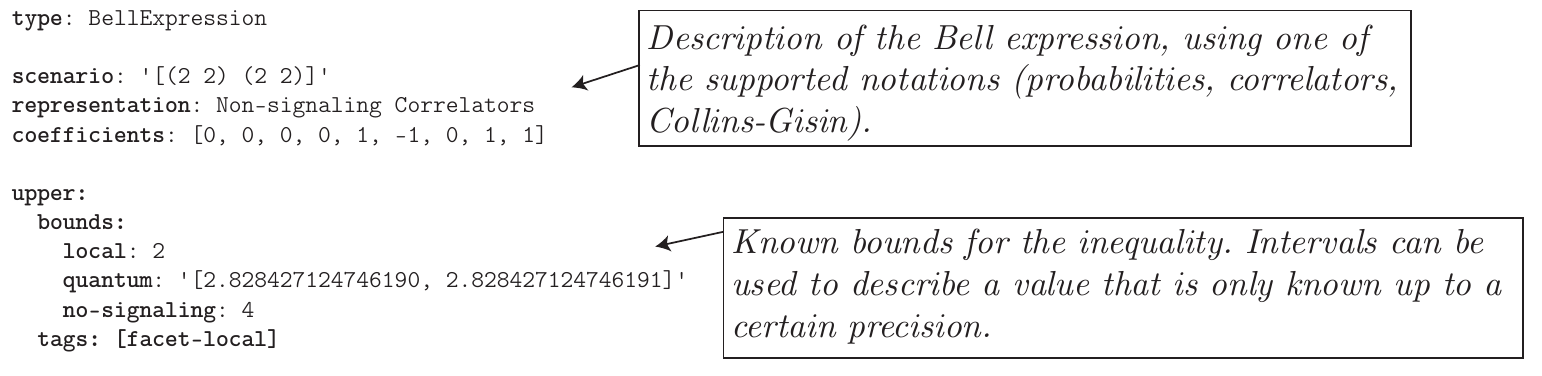}
\caption{Example of a minimal YAML file that describes an oriented Bell expression along with some of its known properties, such as the inequalities it defines. The example here is CHSH. The full specification of the format is available online~\cite{specfaacets}.}
\label{fig:chshyaml}
\end{figure}

The platform can be used either by accessing the website at the URL \textit{faacets.com}, where referenced inequalities can be consulted either in their published or canonical forms, or by downloading the software library~\cite{codefaacets} along with its data files for offline use. The software library is written for the Java platform, and works out-of-the-box from e.g. MATLAB or Python without compiling any package.

The source code is available online~\cite{codefaacets}, and is placed under an open source license. The data files are placed under a Creative Commons license. Both the code and data are developed using public Git repositories, to facilitate open collaboration. We strongly encourage contributions which can either expand the referenced Bell expressions, add information about an inequality already present in the library, or improve and add functionality to the software.

\section*{Conclusion} 
\iffancy\addcontentsline{toc}{section}{\hspace*{-\tocsep}Conclusion}\fi 
We first clarified in Section~\ref{sec:bell-stuff} the concept of Bell inequalities and introduced \textit{oriented Bell expressions} whose bounds define inequalities for different sets of interest (local, quantum, dimension-constrained ...). We then presented in Section~\ref{sec:redundancies} a unified description of several degeneracies present in Bell inequalities. In Section~\ref{sec:cure}, we presented methods to remove each of these redundancies. While doing so, we extended the notion of \textit{liftings}~\cite{liftings} to composite inequalities which involve the product of several Bell expressions and introduced the usage of Computational Group Theory to deal with the relabelings of Bell expressions. In Section~\ref{sec:decomposing}, we showed that these methods can be applied in a consistent and efficient way, to decompose a given Bell inequality in its canonical form. As a consequence of this work, Bell inequalities published in the literature can be classified according to their canonical forms. The online library of inequalities, the open-source software library and the standard file format for inequality interchange are described in Section~\ref{sec:compendium}.

We now encourage contributions to the collection of inequalities contained in the library, either with new Bell expressions, or with additional properties for the ones already referenced. We also encourage researchers to reference Bell inequalities by their canonical index in publications, to facilitate the cross-referencing of research results.

\section*{Acknowledgements} 
\iffancy\addcontentsline{toc}{section}{\hspace*{-\tocsep}Acknowledgements}\fi 
We thank T. Barnea, Y.C. Liang, S. Pironio, G. Puetz and V. Scarani for valuable discussions. This work was supported by the Swiss NCCR “Quantum Science and Technology”, the CHIST-ERA DIQIP, the European ERC-AG QORE, the European SIQS, the Singapore Ministry of Education (partly through the Academic Research Fund Tier 3 MOE2012-T3-1-009) and the Singapore National Research Foundation.

\appendix
\section{Non-homogeneous scenarios}
\label{app:nonhomogeneous}
For pedagogical reasons, the main text focuses on the elimination of redundancies for \textit{homogeneous} scenarios, where the number of measurement settings $m$ and outcomes $k$ is the same everywhere. But, as we will see below, the non-degenerate form of some inequalities can only be provided using non-homogeneous scenarios. We introduce the following notation for those scenarios: $[(k_{11} k_{12} \ldots k_{1 m_1})~(k_{21} k_{22} \ldots k_{2 m_2}) \ldots (k_{n 1} k_{n 2} \ldots k_{n m_n})]$, where $m_i \ge 1$ is the number of measurement settings for the $i^\text{th}$ party and $k_{i j} \ge 2$ is the number of measurement outcomes for the $j^\text{th}$ measurement setting of the $i^\text{th}$ party.

One of the earliest examples of non-degenerate Bell inequality in a non-homogeneous scenario is the one given in~\cite{Pironio14}, in the scenario $[(2~3)~(2~2~2)]$ given below in the Collins-Gisin notation:
\begin{eqnarray}
\label{eq:pironiononhomogeneous}
P_\text{A}(1|1) + P_\text{B}(1|1) + P_\text{B}(1|2) - P_\text{AB}(11|11) - P_\text{AB}(11|12) - P_\text{AB}(11|21) - P_\text{AB}(21|22) & & \nonumber \\
- P_\text{AB}(11|13) + P_\text{AB}(11|23) + P_\text{AB}(21|23) & \ge & 0.
\end{eqnarray}
The simplest homogeneous scenario in which a lifted version of the inequality~\eqref{eq:pironiononhomogeneous} appears is $(2,3,3)$ or $[(3~3~3)~(3~3~3)]$. But in our classification scheme, the canonical form of this inequality resides in a non-homogeneous scenario.

\subsection{Canonical scenarios and relabelings}
\label{app:nonhomo:canonicalscenarios}

In the non-homogeneous scenarios $[(2~3)~(2~2~2)]$, the exchange of the settings of Alice is not a relabeling because it affects the Bell scenario: Alice's settings have a different number of outcomes. For the same reason, Alice and Bob cannot be permuted. Thus, non-homogenous scenarios have intrinsic restrictions on the relabelings that can be performed.

Still, the non-homogeneous scenarios $[(2~3)~(2~2~2)]$ and $[(3~2)~(2~2~2)]$ describe essentially the same physical system, and inequalities can be transfered from one scenario to the other one by applying a \textit{reordering} of the parties and measurement settings (not to be confused with relabellings which do not affect the Bell scenario). Thus, before looking for the minimal lexicographic representative of an expression, we have to reorder the expression so that its scenario is in a \textit{canonical form}.

\begin{definition}
A scenario is in the \textit{canonical form} if:
\begin{itemize}
\item for successive settings, the number of measurement outcomes is nonincreasing: $k_{i j} \ge k_{i, j+1}$,
\item parties are ordered lexicographically: for all successive non-identical parties $i$ and $i+1$, there is a $j \ge 0$ such that $\forall k < j$ we have $k_{i k} = k_{i+1, k}$ and $k_{i j} > k_{i+1, j}$. For these ordering purposes, we define $k_{i j} = 0$ for $j > m_i$.
\end{itemize}
\end{definition}

The canonical form of a given scenario can always be found by reordering parties and settings: the canonical form corresponding to $[(2~3)~(2~2~2)]$ is $[(3~2)~(2~2~2)]$, and the inequality~\eqref{eq:pironiononhomogeneous} becomes:

\begin{eqnarray}
\label{eq:pironiocanonicalscenario}
-P_\text{A}(1|2) - P_\text{B}(1|1) - P_\text{B}(1|2) + P_\text{AB}(11|21) + P_\text{AB}(11|22) + P_\text{AB}(11|11) + P_\text{AB}(21|12) & & \nonumber \\
+ P_\text{AB}(11|23) - P_\text{AB}(11|13) - P_\text{AB}(21|13) & \le & 0,
\end{eqnarray}
where we have changed the lower bound to an upper bound.

\subsection{Parametrizing the no-signaling subspace}

Having reordered the scenario, we now construct a basis for the vector space of Bell-like expressions in the new scenario. In any non-homogenous scenario, the construction given in Section~\ref{sec:cure:no-signaling:basis} is still correct, if we choose the range of all indices to cover all existing outcomes and settings. We give below an example of this parametrization for the expression~\eqref{eq:pironiocanonicalscenario}. Enumerating Alice indices $(a|x)$ as $(1|1), (2|1), (3|1), (1|2), (2|2)$, we construct the following basis:
\begin{equation}
\vec{\mu}_\text{A} = \left ( \begin{array}{r} 1/2 \\ 1/2 \\ 1/2 \\ \hline 1/2 \\ 1/2 \end{array} \right ),
\vec{\lambda}^{1 1}_\text{A} = \left ( \begin{array}{r} 1 \\ -1 \\ 0 \\ \hline 0 \\ 0 \end{array} \right ),
\vec{\lambda}^{2 1}_\text{A} = \left ( \begin{array}{r} 0 \\ 1 \\ -1 \\ \hline 0 \\ 0 \end{array} \right ),
\vec{\lambda}^{1 2}_\text{A} = \left ( \begin{array}{r} 0 \\ 0 \\ 0 \\ \hline 1 \\ -1 \end{array} \right ),
\vec{\nu}_\text{A} = \left ( \begin{array}{r} 1 \\ 1 \\ 1 \\ \hline -1 \\ -1 \end{array} \right ),
\end{equation}
and for Bob, with the enumeration $(b|y)$ as $(1|1), (2|1), (1|2), (2|2), (1|3), (2|3)$:
\begin{equation}
\vec{\mu}_\text{B} = \left ( \begin{array}{r} 1/3 \\ 1/3 \\  \hline 1/3 \\ 1/3 \\ \hline 1/3 \\ 1/3 \end{array} \right ),
\vec{\lambda}^1_\text{B} = \left ( \begin{array}{r} 1 \\ -1 \\ \hline 0 \\ 0 \\ \hline 0 \\ 0 \end{array} \right ),
\vec{\lambda}^2_\text{B} = \left ( \begin{array}{r} 0 \\ 0 \\ \hline 1 \\ -1 \\  \hline 0 \\ 0 \end{array} \right ),
\vec{\lambda}^3_\text{B} = \left ( \begin{array}{r} 0 \\ 0 \\ \hline 0 \\ 0 \\  \hline 1 \\ -1 \end{array} \right ),
\vec{\nu}^1_\text{B} = \left ( \begin{array}{r} 1 \\ 1 \\ \hline -1 \\ -1 \\  \hline 0 \\ 0 \end{array} \right ),
\vec{\nu}^2_\text{B} = \left ( \begin{array}{r} 0 \\ 0 \\ \hline 1 \\ 1 \\  \hline -1 \\ -1 \end{array} \right ).
\end{equation}
Writing $P_\text{A}(1|2) = P_\text{AB}(11|21) + P_\text{AB}(12|21)$, $P_\text{B}(1|1) = P_\text{AB}(11|11) + P_\text{AB}(21|11) + P_\text{AB}(31|11) $, and $P_\text{B}(1|2)= P_\text{AB}(11|12) + P_\text{AB}(21|12) + P_\text{AB}(31|12)$, the table for the original coefficients $\c_{ab|xy}$ look like:
\begin{equation}
\left(
\begin{array}{rr|rr|rr} 
\c_{11|11}  &  \c_{12|11}  & \c_{11|12}  &  \c_{12|12}  &  \c_{11|13}  &  \c_{12|13}  \\  
\c_{21|11}  &  \c_{22|11}  & \c_{21|12}  &  \c_{22|12}  &  \c_{21|13}  &  \c_{22|13}  \\  
\c_{31|11}  &  \c_{32|11}  & \c_{31|12}  &  \c_{32|12}  &  \c_{31|13}  &  \c_{32|13}  \\  
\hline
\c_{11|21}  &  \c_{12|21}  & \c_{11|22}  &  \c_{12|22}  &  \c_{11|23}  &  \c_{12|23}  \\  
\c_{21|21}  &  \c_{22|21}  & \c_{21|22}  &  \c_{22|22}  &  \c_{21|23}  &  \c_{22|23}
\end{array}
\right)
=
\left(
\begin{array}{rr|rr|rr} 0  &  0  &  -1  &  0  &  -1  &  0  \\  -1  &  0  &  0  &  0  &  -1  &  0  \\  -1  &  0  &  -1  &  0  &  0  &  0  \\ \hline 0  &  -1  &  1  &  0  &  1  &  0  \\  0  &  0  &  0  &  0  &  0  &  0 \end{array}
\right) \le 0,
\end{equation}
After decomposition in the basis above, the coefficient table for $\gamma$ is, before and after projection:
\begin{equation}
\left (
\begin{array}{r|rrr|rr}   &  -2  &  -2  &  -2  &  -8  &  -4  \\ \hline 0  &  8  &  -4  &  -4  &  8  &  4  \\  0  &  4  &  4  &  -8  &  4  &  8  \\  6  &  6  &  6  &  6  &  -8  &  -4  \\ \hline  -15  &  -7  &  -7  &  -7  &  4  &  2 \end{array}
\right ) \le 18
 \quad \overset{\text{projection}}{\xrightarrow{\hspace*{1.5cm}}} \quad
\left (
\begin{array}{r|rrr|rr}   &  -1  &  -1  &  -1  &  0  &  0  \\ \hline  0  &  4  &  -2  &  -2  &  0  &  0  \\   0  &  2  &  2  &  -4  &  0  &  0  \\  3  &  3  &  3  &  3  &  0  &  0  \\ \hline   0  &  0  &  0  &  0  &  0  &  0 \end{array}
\right ) \le 9.
\end{equation}
Which converted back to coefficients of the type $\c_{abxy}$ gives, before and after search of the minimal lexicographic representative:
\begin{equation}
\left(
\begin{array}{rr|rr|rr}  7  &  -7  &  -5  &  5  &  -5  &  5  \\   -5  &  5  &  7  &  -7  &  -5  &  5  \\  -5  &  5  &  -5  &  5  &  7  &  -7  \\ \hline   7  &  -3  &  7  &  -3  &  7  &  -3  \\  -9  &  5  &  -9  &  5  &  -9  &  5 \end{array}\right) \le 18
 \quad \overset{\text{minimal lex. repr.}}{\xrightarrow{\hspace*{2.5cm}}} \quad
\left(
\begin{array}{rr|rr|rr}  -7  &  7  &  -5  &  5  &  -5  &  5  \\  5  &  -5  &  -5  &  5  &  7  &  -7  \\  5  &  -5  &  7  &  -7  &  -5  &  5  \\ \hline  -3  &  7  &  7  &  -3  &  7  &  -3   \\  5  &  -9  &  -9  &  5  &  -9  &  5 \end{array}
\right) \le 18
\end{equation}

\section{Computational group theory and Bell scenarios}
\label{app:group}
We define $G$ as the group of all relabelings of parties, settings and outcomes which are compatible with a scenario. When the scenario is homogeneous, the order of that group is $|G| = n! (m!)^n (k!)^{nm}$, corresponding to $n!$ relabelings of parties, $m!$ relabelings for the settings of $n$ parties, $k!$ relabelings of outcomes for $m$ settings of $n$ parties.

An explicit construction of this group (notation, multiplication rules) will be provided in future work; for the purposes of this Appendix, we only need the observation that this group can be generated by the relabelings of each pair of adjacent parties, settings or outcomes - for a homogeneous scenario, this represents $(n-1) + n(m-1) + n m(k-1)$ generators.

After projection in the no-signaling subspace, any Bell expression is represented by a vector of coefficients $\vec{c}\in \mathbb{R}^D$ with coefficients $c(i)$, where each index $i$ represents a tuple $(a,b,...,x,y,...)$. The relabeling of parties, settings or outcomes can be represented by a permutation $\pi$ acting on the indices $(a,b,...,x,y,...)$ and thus on $i$. The action of $\pi$ on $\vec{\c}$ is then defined by $\c^\pi(i^\pi) = \c(i)$, and this action is faithful (i.e. only the action of the identity element of $G$ leaves all $\vec{\c}$ invariant). We also use the {\em{right}} action in this Appendix, i.e. for $g,h \in G$, $i^{gh} = (i^g)^h$.

The study of permutation groups using bases and strong generating sets is exposed at length in~\cite{Holt}, from which we extract the following key points: we study $G$ together with its permutation action on vectors $\vec{\c}$, and define $G^{[1]}$ the subgroup of $G$ of order $|G^{[1]}|$ that leaves the first element of $\vec{\c}$ invariant: $\forall g_1 \in G^{[1]}, 1^{g_1} = 1$. By Lagrange's Theorem, there is a set $U_1$ of $n_1 = \frac{|G|}{|G^{[1]}|}$ elements of $G$ such that every $g\in G$ can be written as:
\begin{equation}
g = g_1 u_1, \qquad u_1 \in U_1, g_1 \in G^{[1]}.
\end{equation}
Moreover, every $u_1 \in U_1$ has a different image $1^{u_1}$. This decomposition can be iterated, by writing $G^{[2]}$ the subgroup of $G^{[1]}$ that leaves the second element of $\vec{c}$ invariant: $\forall g_2 \in G^{[2]}, 2^{g_2} = 2$, and there exists a set $U_2$ of $n_2 = \frac{|G^{[1]}|}{|G^{[2]}|} $ elements of $G^{[1]}$ such that every $g_1 \in G^{[1]}$ can be written as $g_1 = g_2 u_2$ with $ u_2 \in U_2, g_2 \in G^{[2]}$, and every $u_2\in U_2$ having a different image $2^{u_2}$. The last group in the iteration, $G^{[D-1]}$, stabilizing the first $D-1$ elements of $\vec{c}$ is the trival group containing only the identity.  Then, every element of $g\in G$ can be decomposed as:
\begin{equation}
\label{eq:gdecomposition}
g = u_{D-1} ... u_2 u_1, \qquad u_j \in U_j.
\end{equation}

Such a decomposition of $G$ is known as a stabilizer chain, and can be computed efficiently using the Schreier-Sims algorithm, with the prescribed base $1,2,...D-1$. To do so, the faster randomized version can be used without reservations because the order of the group $|G|$ is known in advance. Then, any permutation of $\vec{\c}$ can be decomposed using~\eqref{eq:gdecomposition}:
\begin{equation}
\c^{u_1^{-1} ... u_{d-1}^{-1}}(i) = \c(i^{u_{d-1} ... u_1}),
\end{equation}
and our algorithms will then be based on the observation that the first coefficient $\c^{g^{-1}}(1)$ is selected by $u_1$ only, because $1^{u_{D-1} ... u_2} = 1$. Stabilizer chains also enable the fast computation of subgroups and their order~\cite{Holt}.

We give a sketch below of three algorithms. The first two are used to compute the minimal lexicographic representative of a Bell expression, while the third one can compute the lexicographic rank of a Bell expression, or retrieve a particular representative by its rank. While a complexity analysis of these algorithms is outside the scope of the present work, these algorithms perform well enough to find the minimal lexicographic representative search for any Bell expression given in~\cite{Avis,BancalSym} in seconds on a standard computer.

\subsection{Algorithm to find the minimal representative by lexicographic order}
\label{app:group:minimal}
\begin{figure}
\includegraphics{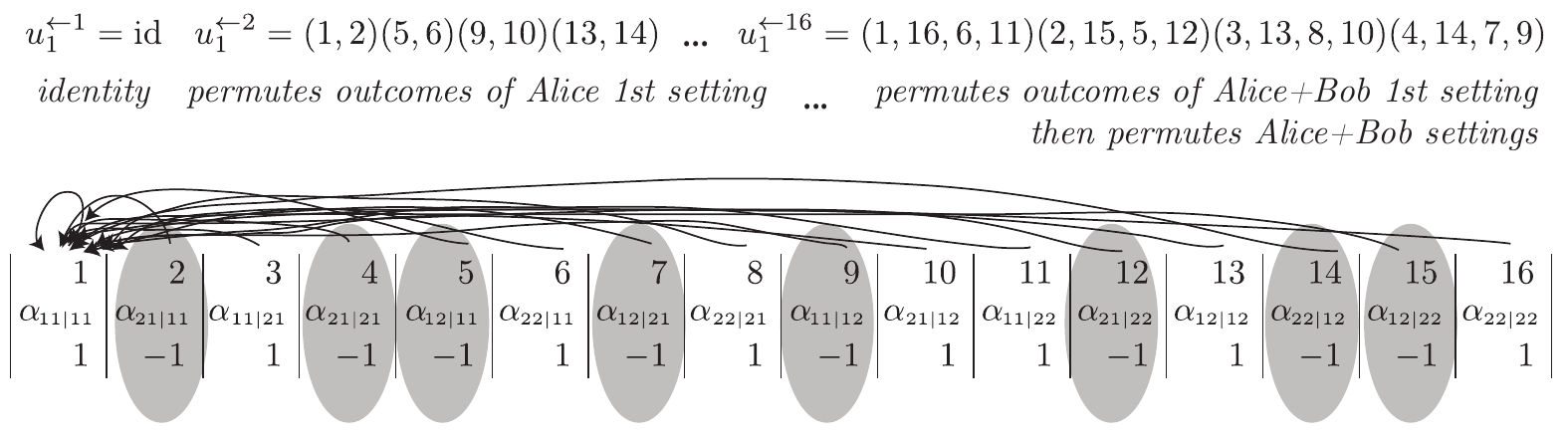}
\caption{Illustration of the first step of Algorithm~\ref{alg:minimal}, which select appropriate candidates based on their action on the first coefficient of the Bell expression, applied to the coefficients of the CHSH inequality~\eqref{eq:chshprojected}. The transversal elements $u_1^{\leftarrow ...}$ corresponding to gray elements are selected as candidates for the next step. The permutations $u_1^{\leftarrow 2}, u_1^{\leftarrow 16}$ are written using the cycle notation, along with a description of the relabeling. Here, the first coefficient can be taken from anywhere, because there is a relabeling between any tuple of indices $(ab|xy)$ and $(11|11)$.  }
\label{fig:algo1}
\end{figure}

\begin{algorithm}
\begin{algorithmic}                    
  \REQUIRE \\
  The current coefficient index $j$ to explore. \\
  The set of candidates $H^j \subset G$.
  \ENSURE \\
  The new set of candidates $H^{j+1}$.
  \\
  \STATE $m  \leftarrow \infty $, $ H^{j+1} \leftarrow \emptyset $
  \FOR{$u \in U^j, h \in H^j $}
  \IF{$ \c(j^{u h}) < m $ }
  \STATE $ m \leftarrow \c(j^{u h}) $, $ H^{j+1} \leftarrow \emptyset $
  \ENDIF
  \IF{$ \c(j^{u h}) = m $ }
  \STATE $ H^{j+1} \leftarrow H^{j+1} \cup \{ u h \} $
  \ENDIF
  \ENDFOR
\end{algorithmic}
\caption{Algorithm to filter appropriate permutation candidates by their action on the coefficient index $j$. Iterated usage of this algorithm for $j=1 ... D-1$ will find permutations of a given current Bell expression to its minimal lexicographic representative.}
\label{alg:minimal}
\end{algorithm}

Given a vector $\vec{\c}$, we want to find a (possibly not unique) element $g\in G$ such that $\vec{\c}^{~g^{-1}}$ is lexicographically minimal. To do so, we use the decomposition~\eqref{eq:gdecomposition}, and look for the candidates such that $\c^{g^{-1}}(1) = \c(1^g)$ is minimal, as shown in Figure~\ref{fig:algo1}. Because $1^{u_j}=1$ for $j > 1$, we only need to filter the $u_1$ for which $\c(1^{u_1})$ is minimal. This procedure can then be repeated for all $j = 1 ... d - 1$. To do so, we start with the index $j = 1$ and the set of permutation candidates $H^1 = \{ \text{identity} \} $, and call Algorithm~\ref{alg:minimal} for $j = 1 ... D-1$. Any permutation from the final set $h \in H^D$ will give us a $\vec{\c}^{~h^{-1}}$ lexicographically minimal.

\subsection{Faster variant of minimal lexicographic representative algorithm}
\label{app:group:faster}
Building on this first algorithm, we construct a faster variant by enumerating the elements of $\c$ using two indices $i$ and $j$, with the index $i$ corresponding to $(a,x)$ and the index $j$ corresponding to the remaining $(b,c,...,y,z,...)$. With this partitioning of indices, the first party is singled out and cannot be permuted with another party, and thus we restrict ourselves to relabelings of settings and outcomes, and relabelings of parties except the first. The algorithm has then to be run several times for each possible first party, and from there the overall minimal representative is chosen. The allowed permutations can be expressed using elements of two permutation groups, $h \in H$ acting on the index $i$ and $g \in G$ acting on the index $j$, such that:
\begin{equation}
\label{eq:matrixpermutation}
\c(i,j)^{(h,g)} = \c(i^{h^{-1}},j^{g^{-1}}).
\end{equation}

As shown in Figure~\ref{fig:algo2}, the object $\c(i,j)$ can be viewed as a matrix, with $H$ acting on the $I$ rows and $G$ acting on the $J$ columns of the matrix.. We write the columns of the matrix $\c(i,j)$ as vectors $\vec{v}_j = (\c(1,j), ... \c(I,j) )$. Note that when the lexicographic order is used to compare Bell expressions, this comparison is done column by column on the matrices $\c(i,j)$.
\begin{figure}[b]
\includegraphics{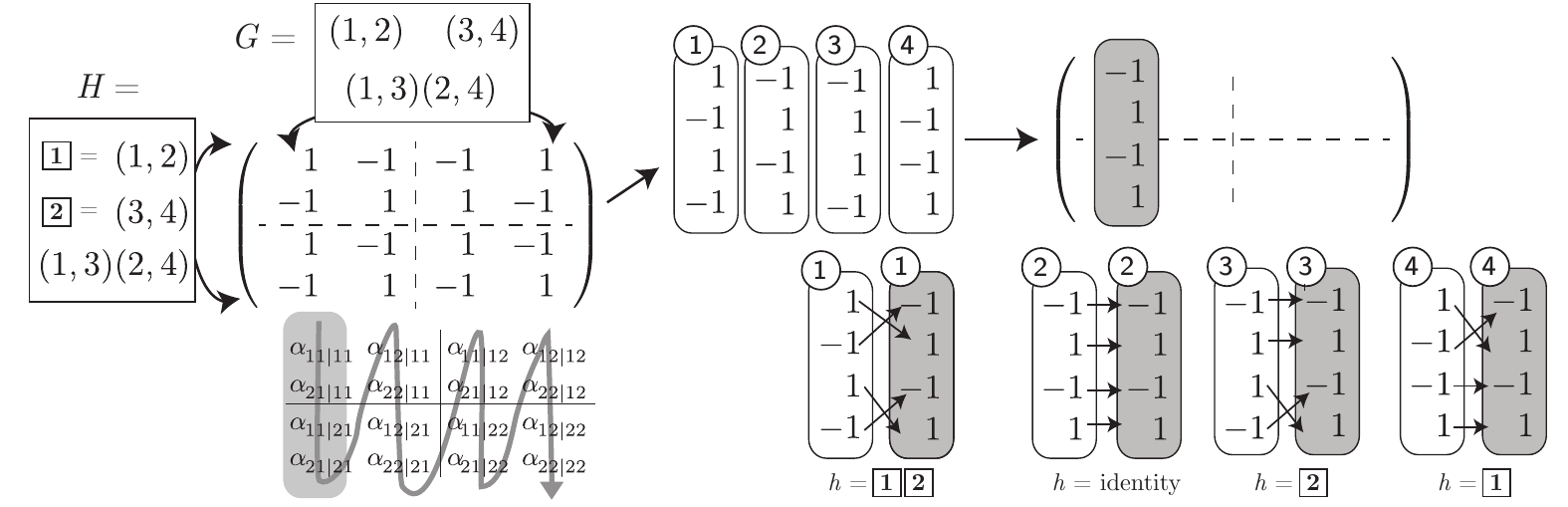}
\caption{Illustration of Algorithm~\ref{alg:matrixfirstlexico}, an optimized variant of Algorithm~\ref{alg:minimal} obtained by partitioning the indices of the first and subsequent parties, and representing the Bell expression as a matrix, whose rows and columns can be permuted. The example shown here is the CHSH inequality. The permutation groups $H$ and $G$ for the rows and columns are shown along with their generators, who correspond to the permutation of outcomes for the first and second measurement setting, and the permutation of measurement settings. Instead of selecting candidates coefficient by coefficient, candidates are filtered column by column. We show here the selection of the candidate for the first column of the coefficient matrix, which can be obtained by placing there a permutation of any of the four original columns, giving four candidates for the next iteration.}
\label{fig:algo2}
\end{figure}

The algorithm is run iteratively for each column, fixing the first $\gamma - 1$ columns to their lexicographic minimal coefficients. Let $\gamma$ be the current column under consideration, and $C_\gamma$ the complete set of current candidate matrices, with the coefficients corresponding to the columns $j < \gamma$ lexicographically minimal. By construction of the algorithm, these candidates are obtained by row permutations and column permutations of the type $u_{\gamma - 1} ... u_1$. To simplify the notation in the algorithm description here, the elements of $C_\gamma$ are the permuted matrices, while in the implementation, the permutated matrices are implicitly described by the permutation pair $(h,g)$ and the original non-permuted matrix.

If $C_\gamma$ is a complete set of candidates having their first $\gamma-1$ columns lexicographically minimal, we can only permute the columns with $j \ge \gamma$ using $G^{[\gamma-1]}$, and apply row permutations that leaves the first $\gamma - 1$ columns invariant. Let $S \subseteq H$ be the maximal subgroup of $H$ that leaves these column vectors invariant.

\begin{algorithm}[t]
\begin{algorithmic}                    
  \REQUIRE \\
  The current column $\gamma$ to put minimal. \\
  The set of candidates $C_\gamma$. \\
  The permutation groups $S$ and $G^{[\gamma-1]}$.

  \ENSURE \\
  The new set of candidates $C_{\gamma+1}$.
  \\
  \STATE $\vec{m} \leftarrow (\infty ... \infty)$, $C_{\gamma+1} \leftarrow \emptyset$

  \FOR{$\delta \in C_\gamma, u \in U^\gamma, h \in S$}
  \STATE $\vec{v} \leftarrow \vec{v} \text{ s.t. } v_i = \delta(i^h, \gamma^u)$
  \IF{$\vec{v} < \vec{m}$}
  \STATE $\vec{m} \leftarrow \vec{v}$, $C_{\gamma+1} \leftarrow \emptyset$
  \ENDIF
  \IF{$\vec{v} = \vec{m}$}
  \STATE $C_{\gamma+1} \leftarrow C_{\gamma+1} \cup \{ \delta^{(h^{-1}, u^{-1})} \} $
  \ENDIF
  \ENDFOR
\end{algorithmic}
\caption{Column candidates selection algorithm for the current column $\gamma$ under consideration.}
\label{alg:matrixfirstlexico}
\end{algorithm}

We want now to construct the set $C_{\gamma+1}$ of candidates with their first $\gamma$ columns lexicographically minimal. These candidates can be obtained by applying the best row permutation from $S$ and using a column permutation from $U_\gamma$, as every element of $g \in G^{[\gamma-1]}$ can be decomposed as $g = u_{J-1} ... u_\gamma$, with the permuted column at index $\gamma$ chosen by $u_\gamma$. 

This is achieved using Algorithm~\ref{alg:matrixfirstlexico}, with initial set $C_1 = \{ \c \}$, iterating through $\gamma = 1 ... J-1$. An additional speed-up is given by filtering adequate $h \in S$ using Algorithm~\ref{alg:minimal} for each column selected by $u\in U^\gamma$.

\subsection{Algorithm to determine the $k^\text{th}$ representative and find the rank of a representative}
\label{app:group:rank}

All representatives of a Bell expression under relabelings can be enumerated and sorted in the lexicographic order, as is shown in the left part of Figure~\ref{fig:algo3}. Any representative can be distinguished by computing its rank in the list. We provide here a fast algorithm to compute the rank of a representative, and to retrieve the $k^\text{th}$ lexicographic representative in the list, while computing only a few explicit representatives. Our algorithm is based on the following observation.
\begin{proposition}
\label{prop:numberrepr}
Let $G$ be a permutation group acting on a Bell expression $\vec{\c}$. Let $S \subseteq G$ be the maximal subgroup of $G$ that leaves $\vec{\c}$ invariant, i.e. $\forall s \in S, \vec{\c}^s = \vec{\c}$. Then the number of representatives of $\vec{\c}$ under relabelings by $G$ is given by $n = \frac{|G|}{|S|}$. If $G$ is the group of all relabelings of parties, settings and outcomes, $n$ is the total number of representatives of $\vec{\c}$ under relabelings.
\end{proposition}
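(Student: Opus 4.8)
The plan is to recognize this proposition as the orbit--stabilizer theorem specialized to the permutation action of $G$ on $\vec{\c}$. The set of representatives of $\vec{\c}$ under relabelings is exactly the orbit $\mathcal{O} = \{ \vec{\c}^{\,g} : g \in G \}$, and $n$ is by definition $|\mathcal{O}|$. I would therefore establish a bijection between $\mathcal{O}$ and the set of right cosets of $S$ in $G$, and then count the latter using Lagrange's theorem, exactly in the spirit of the coset decomposition~\eqref{eq:gdecomposition} already used in this Appendix.

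First I would check that $S$ is genuinely a subgroup: because $g \mapsto (\vec{\c} \mapsto \vec{\c}^{\,g})$ is a right action, one has $\vec{\c}^{\,s_1 s_2} = (\vec{\c}^{\,s_1})^{s_2} = \vec{\c}^{\,s_2} = \vec{\c}$ for $s_1, s_2 \in S$, the identity fixes $\vec{\c}$, and $\vec{\c}^{\,s} = \vec{\c}$ implies $\vec{\c}^{\,s^{-1}} = \vec{\c}$; closure, identity and inverses follow. The core step is the bijection. Using the right action convention of this Appendix, I would show that for $g_1, g_2 \in G$,
\begin{equation}
\vec{\c}^{\,g_1} = \vec{\c}^{\,g_2} \iff \vec{\c}^{\,g_1 g_2^{-1}} = \vec{\c} \iff g_1 g_2^{-1} \in S \iff S g_1 = S g_2 .
\end{equation}
The first equivalence uses multiplicativity of the action together with the fact that $g_2$ acts invertibly; the second is the definition of $S$; the third is the standard criterion for equality of right cosets. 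This chain shows that the map $S g \mapsto \vec{\c}^{\,g}$ is well defined and injective on the coset space $S \backslash G$, while it is surjective onto $\mathcal{O}$ by construction. Hence $|\mathcal{O}| = |S \backslash G|$.

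Finally I would count the cosets. By Lagrange's theorem the number of right cosets of $S$ in $G$ equals $|G| / |S|$, so $n = |\mathcal{O}| = |G|/|S|$, as claimed. The closing statement is then immediate: if $G$ is taken to be the full group of relabelings of parties, settings and outcomes, the orbit $\mathcal{O}$ is precisely the complete set of representatives of $\vec{\c}$ under relabelings, so $n$ counts all of them.

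The main obstacle I anticipate is purely bookkeeping around the action convention: because the Appendix uses the right action $\vec{\c}^{\,g}$ with $i^{gh} = (i^g)^h$, one must be careful that equal orbit elements correspond to equal \emph{right} cosets $S g$ (rather than left cosets), and that well-definedness and injectivity of the coset-to-orbit map are argued in the correct direction. Once the convention is fixed consistently, the remaining steps are routine finite group theory.
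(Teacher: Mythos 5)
Your proposal is correct and follows essentially the same route as the paper's proof: identifying the representatives leading from a given one with the right cosets of the stabilizer $S$ in $G$ and counting them by Lagrange's theorem. Your version is merely more explicit about the coset--orbit bijection and the verification that $S$ is a subgroup, and your handling of the right-action convention matches the paper's.
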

\begin{proof}
The right cosets of $S$ in $G$ are written $S g$ for a $g \in G$. All elements of a coset of $S$ in $G$ lead to the same representative of $\vec{\c}$. Indeed, $\forall s \in S$,
\begin{equation}
\vec{\c}^{s g} = (\vec{\c}^s)^g= \vec{\c}^g,
\end{equation}
and because $S$ is the maximal subgroup of $G$ that leaves $\c$ invariant, all permutations leading to the same representative belong to the same coset. The number of right cosets, and thus of representatives is then given by Lagrange's theorem.
\end{proof}

\begin{figure}
\includegraphics{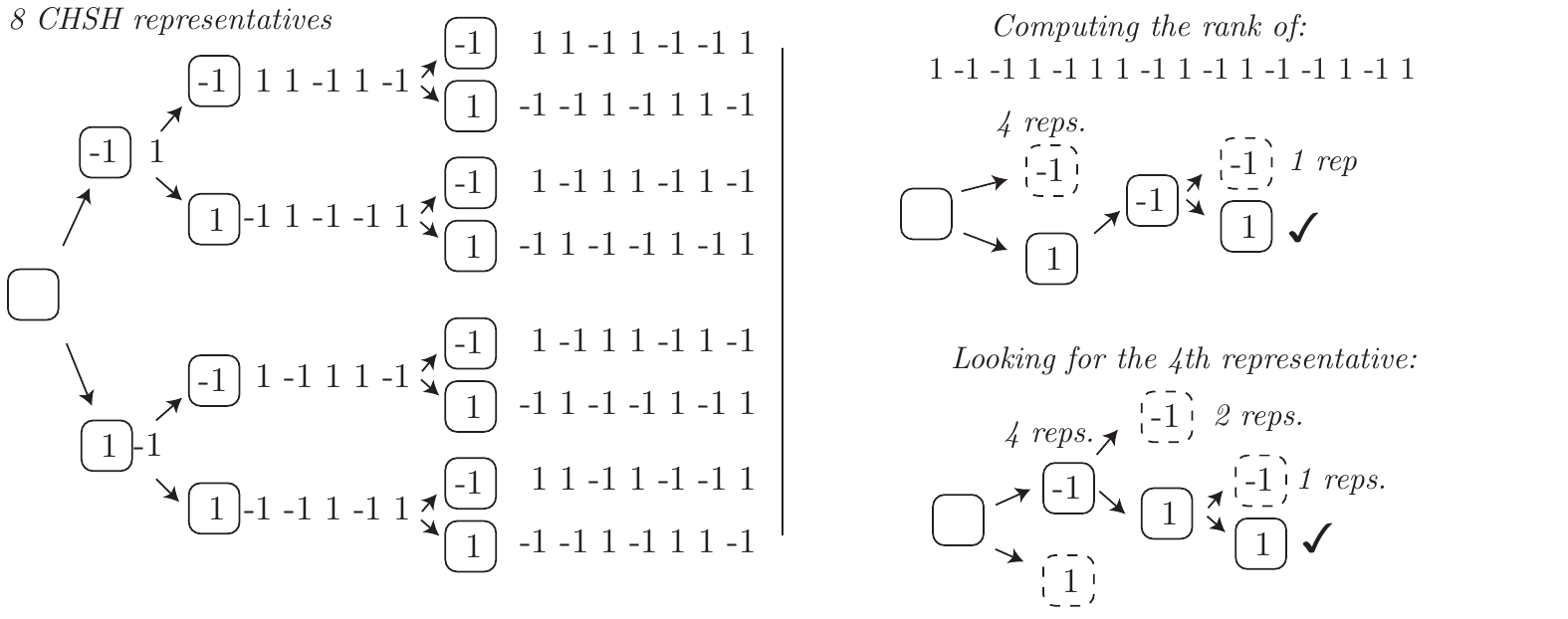}
\caption{List of the 8 relabelings of the CHSH inequality~\eqref{eq:chshprojected} already enumerated in Figure~\ref{fig:coefficientsorder}, shown here as a tree, with their 16 coefficients enumerated from left to right. As shown in the right part, the number of representatives contained in a branch can be computed without exploring the branch at all. Top-right: the index of the considered representative is computed to be $4+1+1=6$. Bottom-right: to find the $4^\text{th}$ representative we select the coefficient $-1$ at the first branch because the $4^\text{th}$ representative is inside this first block of size $4$. At the second branch, we skip the first block of size $2$, and at the third branch then skip a block of size $1$ to obtain the looked-for representative. }
\label{fig:algo3}
\end{figure}

Given a Bell expression $\vec{\c}$ and its permutation group $G$, we associate to each sequence $\vec{f} = (f_1, ... f_j)$ with $j \in \{1,...,D-1\}$ the \textit{block} $B_{\vec{f}}$ of representatives of $\vec{\c}$ under relabelings whose coefficients start with $\vec{f}$:
\begin{equation}
B_{\vec{f}} = \left \{ \vec{\c}^g \middle | g \in G, \forall i \le j, \c^g(i) = f_i \right \}.
\end{equation}
The block $B_{\vec{f}}$ can be described by the action of the subgroup $G^{[j]}$ on a small set of candidates $C_{\vec{f}}$ such that:
\begin{equation}
\label{eq:app:group:rank:B}
B_{\vec{f}} = \left \{ \vec{c}^{~g} \middle | g \in G^{[j]}, \vec{c} \in C_{\vec{f}} \right \}.
\end{equation}

We say that the set $C_{\vec{f}}$ is minimal when all $b \in B_{\vec{f}}$ in~\eqref{eq:app:group:rank:B} are generated by an unique $\vec{c} \in C_{\vec{f}}$, which implies that for all $\vec{c}_1 \ne \vec{c}_2 \in C_{\vec{f}}$, there is no $h \in G^{[j]}$ such that $\vec{c}_1^h = \vec{c}_2$. When the set $C_{\vec{f}}$ is minimal, the number of representatives in $ B_{\vec{f}}$ can be found using Proposition~\ref{prop:numberrepr} on every element in $C_{\vec{f}}$. 

A sequence of coefficients $\vec{f}$ of length $j$ can be extended by adding another coefficient $f_{j+1}$. To each feasible $f_{j+1}$ corresponds a new block $B_{(f_1 ...  f_j f_{j+1})}$. These blocks can be ordered by the value of $f_{j+1}$, and the number of their representatives is easily computed. As shown in Figure~\ref{fig:algo3}, this enables the fast computation of the rank of a given representative or the search of a representative of given rank.

A minimal set $C_{\vec{f}}$ can be constructed by taking all $\vec{c} = \vec{\c}^{~u_1^{-1} ... u_j^{-1} h^{-1}}$ with $ u_k \in U_k$, $h \in G^{[j]}$ such that $ c_i = f_i $ for $i \le j$ and $\vec{c}$ is lexicographically minimal under permutation by $G^{[j]}$.  We show in Algorithm~\ref{alg:block} how to construct such minimal sets $C_{\vec{f}}$ by growing $\vec{f}$ one element at a time, starting with $j = 0$ and $C_{()} = \{ \vec{m} \}$ containing the minimal lexicographic representative $\vec{m}$ of $\vec{\c}$.

\begin{algorithm}
\begin{algorithmic}                    
  \REQUIRE \\
  The current level $j$ to explore, and the starting coefficients $\vec{f}$. \\
  The set of candidates $C_{\vec{f}}$. \\
  The permutation groups $G^{[j]}$ and $G^{[j+1]}$.

  \ENSURE \\
  Sequence of coefficients $F = \{ f_{j+1} \}$ indexing the blocks. \\
  The number $N_{f_{j+1}}$ of representatives in the block $B_{(f_1 ... f_j f_{j+1})}$. \\
 Minimal sets $C_{(f_1 ... f_j f_{j+1})}$.
  \\
  \STATE  $F \leftarrow \left \{ c^{u^{-1}}(j) \middle | u \in U_{j+1}, \vec{c} \in C_{\vec{f}} \right \} $

  \FOR{$f_{j+1} \in F$}
  \STATE $C_{(f_1 ... f_j f_{j+1})} \leftarrow \emptyset $
  \FOR{$u \in U_{j+1}, \vec{c} \in C_{\vec{f}}$}
  \IF{$c^{u^{-1}}(j) = f_{j+1}$}
  \STATE $C_{(f_1 ... f_j f_{j+1})} \leftarrow C_{(f_1 ... f_j f_{j+1})} \cup \left \{ \text{min}_{G^{[j+1]}} \vec{c}^{u^{-1}} \right \}$
  \ENDIF
  \ENDFOR
  \STATE $N_{f_{j+1}} \leftarrow 0$
  \FOR{$c \in C_{(f_1 ... f_j f_{j+1})} $}
  \STATE $N_{f_{j+1}} \leftarrow N_{f_{j+1}} + \frac{\left |G^{[j+1]} \right|}{\left|S_{G^{[j+1]}}(\vec{c})\right|}$
  \ENDFOR
  
  \ENDFOR
\end{algorithmic}
\caption{Block construction algorithm for the $k^\text{th}$ representative algorithm. In the algorithm, we write $\text{min}_{G^{[j+1]}} \vec{c}^{u^{-1}} $ the minimal lexicographic representative of $\vec{c}^{u^{-1}} $ under permutation by the group $G^{[j+1]}$, and $S_{G^{[j+1]}}(\vec{c})$ the maximal subgroup of $G^{[j+1]}$ that leaves $\vec{c}$ invariant. We note that once the elements of $F$ have been computed, the characteristics of each block can be computed in parallel and on demand; when looking for a particular representative, only the blocks with $f_{j+1}$ less or equal to the looked-for coefficient need to be computed.}
\label{alg:block}
\end{algorithm}

\section{Composite Bell expressions}\label{compositeAppendix}
In Section~\ref{sec:cure:composite} of the main text we introduced the notion of composite Bell expressions. In this Appendix we prove the properties of these expressions and of their bounds mentioned in the main text.

\subsection{Identifying composite expressions and their general form}
Let us recall the definition given in Section~\ref{sec:cure:composite}.

\begin{definition} A bipartite Bell expression $\c_{abxy}$ is composite if $\exists$ $\kappa$, $\c'_{ax}$, $\c''_{by}$ such that
\begin{equation}\label{eq:defineComposite}
\c_{abxy} = \kappa \mu_{ax}\mu_{by} + \c'_{ax}\c''_{by}.
\end{equation}
\end{definition}
This definition naturally extends to Bell expressions involving more parties, by letting $\c'$ and $\c''$ describe expressions for two disjoint subset of parties. In this Appendix $(a,x)$ should thus be understood as the settings and outcomes of a group of parties, and similarly for $(b,y)$. We then refer to A and B as two (non-overlaping) groups of parties.

As discussed in the main text, the coefficients $\c_{abxy}$ of an arbitrary Bell expression are not uniquely defined due to the normalization and no-signaling conditions. This opens the possibility for an expression which doesn't satisfy Eq.~\eqref{eq:defineComposite} to be equivalent over the set of no-signaling correlations to another one, $\c'_{abxy}$, which might admit such decomposition. Let us show that this degeneracy can be dealt with through the choice of parametrization provided in section~\ref{sec:cure:no-signaling} because any Bell expression which can be written as composite for no-signaling correlations must appear so in this parametrization.

\begin{proposition}\label{propositionSig}
If a Bell expression $\c_{abxy}$ can be decomposed into two expressions $\c'_{ax}$, $\c''_{by}$, then its parametrization $\overline{\c}_{abxy}$ (described in Eq.~\ref{eq:cbar}) can also be decomposed into two expressions $\overline{\c}'_{ax}$, $\overline{\c}''_{by}$.
\end{proposition}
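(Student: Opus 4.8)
The plan is to exploit that the parametrization map $\c_{abxy}\mapsto\overline{\c}_{abxy}$ of Eq.~\eqref{eq:cbar} is \emph{linear} and compatible with the tensor structure of the single-party basis $\{\mu\}\cup\{\lambda^{\zeta\xi}\}\cup\{\nu^\xi\}$. First I would recast the construction of Section~\ref{sec:cure:no-signaling:basis} as an explicit linear operator $\Pi$ on $\Omega^*$. On party A it acts as the projection $\pi_A$ onto $V_A=\text{span}(\{\mu\}\cup\{\lambda\})$ along $N_A=\text{span}(\{\nu\})$, and likewise $\pi_B$ for party B; globally one then removes the component along $\mu_{ax}\mu_{by}$. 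In other words $\Pi(\c)=(\pi_A\otimes\pi_B)(\c)$ with its $\mu_{ax}\mu_{by}$-coefficient reset to zero, and $\Pi(\c)=\overline{\c}$ by definition of Eq.~\eqref{eq:cbar}. Since reading off coordinates in the basis and reconstructing from them are both linear, $\Pi$ is linear, and this is the only structural fact I need.

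Next I would apply $\Pi$ term by term to the composite form $\c_{abxy}=\kappa\,\mu_{ax}\mu_{by}+\c'_{ax}\c''_{by}$. The constant term drops out: $\pi_A\otimes\pi_B$ fixes $\mu_{ax}\mu_{by}$ (as $\mu\in V_A$, $\mu\in V_B$), and the subsequent removal of the $\mu\mu$-component gives $\Pi(\mu_{ax}\mu_{by})=0$ -- which is just the familiar statement that a constant shift is invisible on normalized correlations. For the product term I would decompose each factor along the direct sum, $\c'_{ax}=p\,\mu_{ax}+\ell'_{ax}+\nu'_{ax}$ and $\c''_{by}=q\,\mu_{by}+\ell''_{by}+\nu''_{by}$ with $\ell',\ell''\in\text{span}(\{\lambda\})$ and $\nu',\nu''\in\text{span}(\{\nu\})$. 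Applying $\pi_A\otimes\pi_B$ kills the $\nu$ factors and leaves $\pi_A(\c')_{ax}\,\pi_B(\c'')_{by}=(p\,\mu_{ax}+\ell'_{ax})(q\,\mu_{by}+\ell''_{by})$.

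The key algebraic step is then to identify the residual $\mu\mu$-term. Setting $\overline{\c}'_{ax}=\pi_A(\c')_{ax}$ and $\overline{\c}''_{by}=\pi_B(\c'')_{by}$ and expanding, the four cross terms $pq\,\mu_{ax}\mu_{by}$, $p\,\mu_{ax}\ell''_{by}$, $q\,\ell'_{ax}\mu_{by}$ and $\ell'_{ax}\ell''_{by}$ appear; only the first carries a $\mu\mu$-component, of value $pq$. Resetting it to zero yields
\begin{equation}
\overline{\c}_{abxy}=\Pi(\c)_{abxy}=\overline{\kappa}\,\mu_{ax}\mu_{by}+\overline{\c}'_{ax}\overline{\c}''_{by},\qquad\overline{\kappa}=-pq,
\end{equation}
which is exactly a composite decomposition of $\overline{\c}$ in the sense of Eq.~\eqref{eq:defineComposite}. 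The same computation goes through verbatim when $(a,x)$ and $(b,y)$ denote disjoint groups of parties, with $\mu_{ax}$ and $\mu_{by}$ standing for the all-$\mu$ tensors of the respective groups.

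I expect the only delicate point to be the bookkeeping of these cross terms: one must check that absorbing the surviving $\mu$-parts of $\c'$ and $\c''$ into the new single-party expressions $\overline{\c}'$ and $\overline{\c}''$ leaves behind a \emph{pure} $\mu\otimes\mu$ residue and nothing else, so that it can be parked in the constant coefficient $\overline{\kappa}$ without spoiling the product structure. Everything else follows immediately from linearity of $\Pi$ and the direct-sum decomposition $\Omega^*=(V_A\oplus N_A)\otimes(V_B\oplus N_B)$.
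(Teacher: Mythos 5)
Your proof is correct and follows essentially the same route as the paper's: both express the composite identity in the symmetric basis $\{\mu\}\cup\{\lambda\}\cup\{\nu\}$, observe that projecting out the $\nu$-components preserves the tensor-product structure, and absorb the surviving $\mu\otimes\mu$ residue into the constant term of the decomposition. Your version merely makes explicit (via the operator $\Pi=\pi_A\otimes\pi_B$ and the expansion into cross terms with residue $\overline{\kappa}=-pq$) what the paper states by setting the last rows and columns of the coefficient table to zero.
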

\begin{proof}
Tensor product structures are independent of particular choices of bases. Let us thus use the basis $\{u\}$ of section~\ref{sec:cure:no-signaling} for each group of parties. This basis contains three types of elements: $\mu$, $\lambda$ and $\nu$ (for groups involving more than one party, any basis element containing a $\nu$ vector for one of its party is of type $\nu$). In this basis, identity~\eqref{eq:defineComposite} thus takes the form (written here in a table form similar to Figure~\ref{fig:coefficientsmatrix})
\begin{equation}\label{eq:table}
\left(\begin{array}{c c c}
\gamma(\mu,\mu) & \gamma(\mu,\lambda) & \gamma(\mu,\nu)\\
\gamma(\lambda,\mu) & \gamma(\lambda,\lambda) & \gamma(\lambda,\nu)\\
\gamma(\nu,\mu) & \gamma(\nu,\lambda) & \gamma(\nu,\nu)
\end{array}\right)
= \kappa \mu_{ax}\mu_{by} + 
\left(\begin{array}{c}
\gamma_A(\mu)\\
\gamma_A(\lambda)\\
\gamma_A(\nu)\\
\end{array}\right)
\left(\begin{array}{ccc}
\gamma_B(\mu) & \gamma_B(\lambda) & \gamma_B(\nu)
\end{array}\right),
\end{equation}
where the different $\gamma$s might denote vectors or matrices when their arguments include either $\lambda$ or $\nu$.
Clearly, setting the last lines and columns to 0 on the LHS and RHS of~\eqref{eq:table} preserves equality. This also defines expressions $\overline{\c}_{abxy}$, $\overline{\c}'_{ax}$ and $\overline{\c}''_{by}$ through Eq.~\eqref{eq:cbar}, thus concluding the proof.
\end{proof}


The previous proposition ensures that when checking for the decomposability of a Bell expression, one does not need to consider the adjunction of terms containing elements of the type $\nu$ to the expression. The addition of a constant, however can be necessary to decompose a Bell expression. For instance, $P_A(1|1)+P_B(1|1)+P_{AB}(11|11)$ is only a tensor product after addition of the constant 1. Thus exactly one parameter needs to be chosen in order to verify whether an expression can be decomposed as a tensor product: a shift by a constant. Let us show that an expression admits a decomposition only for one value of this constant.

\begin{proposition}\label{propositionSingleConstant}
Consider a non-i/o-lifted multipartite Bell expression. If the addition of a constant $\kappa$ allows one to factor a group of parties from this expression, and the addition of a constant $\kappa'$ allows one to factor another group of parties, then one must have $\kappa=\kappa'$.
\end{proposition}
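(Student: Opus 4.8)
The plan is to pass to the symmetric no-signaling parametrization of Section~\ref{sec:cure:no-signaling:basis}, in which each party carries a distinguished constant direction $\mu$; by Proposition~\ref{propositionSig} any tensor factorization survives this passage. The essential observation is that, for every biseparation, the constant term $\mu_{G}\mu_{\bar G}$ is the \emph{same} grouping-independent all-$\mu$ tensor, which I denote $e$. Hence the two hypotheses read $\c = \kappa\, e + \c'_{G_1}\c''_{\bar G_1}$ and $\c = \kappa'\, e + \tilde\c'_{G_2}\tilde\c''_{\bar G_2}$, and subtracting them isolates
\begin{equation}
\c'_{G_1}\c''_{\bar G_1} - \tilde\c'_{G_2}\tilde\c''_{\bar G_2} = \delta\, e, \qquad \delta := \kappa' - \kappa .
\end{equation}
Everything reduces to proving $\delta = 0$.

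First I would convert this into a polynomial identity. Contract $\c$ with a product functional $\bigotimes_i (\mu_i^* + \psi_i)$, where $\mu_i^*$ reads the $\mu$-coefficient of party $i$ and $\psi_i$ ranges over the complementary directions. Since $e$ pairs to $1$ and a tensor product across a cut pairs to the product of its two sides, the identity becomes $F(\psi) - F'(\psi) = \delta$, where $F = f_{G_1} f_{\bar G_1}$ factorizes across cut~$1$ and $F' = \tilde f_{G_2} \tilde f_{\bar G_2}$ factorizes across cut~$2$; each factor $f$ is a multilinear polynomial in the corresponding $\psi_i$, with constant term equal to the $\mu$-coefficient of the associated sub-expression.

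The key step exploits separability as a differential identity. Because cut~$1$ and cut~$2$ are \emph{distinct} biseparations with all four groups nonempty, a short counting argument produces a pair of parties $p, q$ lying on opposite sides of \emph{both} cuts. For such a pair, $F$ is a product of one factor carrying $\psi_p$ and another carrying $\psi_q$, and the same is true of $F'$; consequently both satisfy the rank-one relation $\Phi\,\partial_{\psi_p}\partial_{\psi_q}\Phi = (\partial_{\psi_p}\Phi)(\partial_{\psi_q}\Phi)$. As $F' = F - \delta$, all derivatives of $F$ and $F'$ agree, so subtracting the two relations gives the polynomial identity $\delta\,\partial_{\psi_p}\partial_{\psi_q}F \equiv 0$. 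Now $\partial_{\psi_p}\partial_{\psi_q}F = (\partial_{\psi_p} f_{G_1})(\partial_{\psi_q} f_{\bar G_1})$ is not identically zero, since the expression is non-i/o-lifted and therefore has no superfluous party, so both $p$ and $q$ genuinely enter their factors. Choosing the directions $\psi_p,\psi_q$ along which these factors actually vary forces $\delta = 0$, i.e. $\kappa = \kappa'$.

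I expect the main difficulty to lie not in the algebra but in the non-degeneracy bookkeeping. One must exclude the degenerate cases in which a factor is a pure constant --- a superfluous party --- because there the ``factorization'' is vacuous and $\delta$ is genuinely unconstrained; this is exactly what the non-i/o-lifted hypothesis rules out, via the translation ``every party contributes a non-$\mu$ component to its factor.'' The two points requiring care are therefore this translation and the combinatorial claim that two distinct biseparations always admit a pair of parties separated by both; once these are in place, the derivative identity delivers the result at once.
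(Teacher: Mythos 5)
Your proof is correct, but it takes a genuinely different route from the paper's. The paper reduces to three groups of parties (the two factored groups plus the remainder), writes both factorizations componentwise in the symmetric basis as $\Gamma_{ijk}=\kappa\,\delta_{i1}\delta_{j1}\delta_{k1}+\Gamma^A_i\Gamma^{BC}_{jk}=\kappa'\,\delta_{i1}\delta_{j1}\delta_{k1}+\Gamma^{AB}_{ij}\Gamma^C_k$, singles out a nonzero coefficient $\Gamma_{212}$ (non-trivial on both factored groups, at $\mu$ on the remainder), and finishes by a short case analysis on whether certain marginal components vanish, ending with the explicit identity $\Gamma^C_1\Gamma^{AB}_{11}=\Gamma^A_1\Gamma^{BC}_{11}$. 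You instead encode the tensor in a generating multilinear polynomial, turn separability into the rank-one relation $\Phi\,\partial_p\partial_q\Phi=(\partial_p\Phi)(\partial_q\Phi)$, and subtract the two relations for $F$ and $F'=F-\delta$ to get $\delta\,\partial_p\partial_q F\equiv 0$. Your two flagged prerequisites both check out: the existence of a pair separated by both cuts follows because its failure would force one of the four parts of the two bipartitions to be empty, and non-i/o-liftedness does exclude superfluous parties (a party all of whose settings are outcome-independent is superfluous), which is exactly what makes $\partial_p f_{G_1}$ and $\partial_q f_{\bar G_1}$ nonzero as polynomials in disjoint variables. What your version buys is uniformity: it works for any number of parties and for an arbitrary pair of bipartitions, including crossing cuts where the two factored groups overlap and no reduction to three disjoint groups is available, and it only requires each of $p,q$ to appear \emph{somewhere} in its factor rather than requiring a single coefficient that is simultaneously non-trivial on both factored groups and at $\mu$ on everything else. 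The paper's argument is more elementary and fully explicit at the level of individual basis coefficients, which makes it easier to implement directly in the decomposition algorithm.
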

\begin{proof}
The statement is trivial in the bipartite case since factoring one party then automatically factors the other one, and a party can be factored only for one value of the constant. Let us thus consider a Bell expression $\c_{abcxyz}$ with three groups of parties.

Following the parametrization of Section~\ref{sec:cure:no-signaling}, this Bell expression can be written as a tensor of $\gamma$ components which we denote by $\Gamma_{ijk}$: here each index corresponds to one group of parties. Thanks to Proposition~\ref{propositionSig}, we know that we can choose all free components of $\Gamma_{ijk}$ equal to zero, except possibly the normalization $\Gamma_{111}$.

That one group of parties, say A, can be factored in $\Gamma_{ijk}$ after addition of the constant $\kappa$ means that there exist some tensors $\Gamma^A_i$ and $\Gamma^{BC}_{jk}$ such that
\begin{equation}
\Gamma_{ijk} = \kappa \delta_{i,1}\delta_{j,1}\delta_{k,1} + \Gamma^A_i \cdot \Gamma^{BC}_{jk}.
\end{equation}
Simliarly, one has $\Gamma^C_k$ and $\Gamma^{AB}_{ij}$ such that
\begin{equation}
\Gamma_{ijk} = \kappa' \delta_{i,1}\delta_{j,1}\delta_{k,1} + \Gamma^{AB}_{ij} \cdot \Gamma^C_{k}
\end{equation}
if C can be factored out after addition of the constant $\kappa'$.

The fact that $\Gamma$ is not a lifting under input and outputs implies that there exist an index, denote it $(i,j,k)=(2,1,2)$ such that $\Gamma_{212}\neq 0$. Let us write
\begin{align}\label{eq:explicit4}
\Gamma^A_1\cdot \Gamma^{BC}_{11}-\kappa &= \Gamma^{AB}_{11}\cdot \Gamma^C_1 - \kappa' ,&  \Gamma^A_1\cdot \Gamma^{BC}_{12} &= \Gamma^{AB}_{11}\cdot \Gamma^C_2,\nonumber\\
\Gamma^A_2\cdot \Gamma^{BC}_{11} &= \Gamma^{AB}_{21}\cdot \Gamma^C_1 ,&  \Gamma^A_2\cdot \Gamma^{BC}_{12} &= \Gamma^{AB}_{21}\cdot \Gamma^C_2.
\end{align}
Since $\Gamma_{212}\neq 0$, we find that $\Gamma^A_2,\ \Gamma^{AB}_{21},\ \Gamma^{BC}_{12},\ \Gamma^C_2 \neq 0$. Moreover, if $\Gamma^A_1=0$ then $\Gamma^{AB}_{11}=0$ and we directly obtain $\kappa=\kappa'$. Similarly if $\Gamma^C_1=0$. So we are only interested in the case $\Gamma^A_1,\Gamma^C_1\neq 0$, which has $\Gamma^{AB}_{11},\Gamma^{BC}_{11}\neq 0$. Therefore we can write
\begin{equation}
\Gamma^A_2=\frac{\Gamma^{AB}_{21}}{\Gamma^{BC}_{11}}\Gamma^C_1,\ \ \Gamma^C_2=\frac{\Gamma^{BC}_{12}}{\Gamma^{AB}_{11}}\Gamma^A_1
\end{equation}
and thus
\begin{equation}
\begin{split}
\Gamma^C_1 \frac{\Gamma^{AB}_{21}}{\Gamma^{BC}_{11}}\Gamma^{BC}_{12} = \Gamma^A_1 \frac{\Gamma^{BC}_{12}}{\Gamma^{AB}_{11}}\Gamma^{AB}_{21} \quad \Leftrightarrow \quad \Gamma^C_1 \Gamma^{AB}_{11} = \Gamma^A_1 \Gamma^{BC}_{11}.
\end{split}
\end{equation}
Compared to the first equation of~\eqref{eq:explicit4}, this shows that $\kappa=\kappa'$.
\end{proof}

One consequence of this proposition is that when writing an expression in a table form across a bipartition (like in the LHS of Eq.~\eqref{eq:table}), it is only possible for an expression to be a product across this bipartition if the rank of this matrix is smaller or equal to 2.

Proposition~\ref{propositionSingleConstant} also guarantees that the order of groups of parties according to which one tests for a tensor structure does not matter: all give the same result. Thus, after the identification of its tensor structure, a composite Bell expression must be of the following form:
\begin{equation}\label{eq:typicalComposition}
\c \sim \kappa_1 + \c_{A}\otimes \c_{BC} \otimes (\kappa_2 + \c_{DE}\otimes \c_{FGH}),
\end{equation}
where we made the tensor product explicit and indices on the Bell expressions indicate the parties they involve. Note that in this decomposition, expressions are only defined up to a constant, which is arbitrary: changing $\c_{DE}$ for $\c_{DE}/2$ and $\c_{FGH}$ for $2\c_{FGH}$ keeps Eq.~\eqref{eq:typicalComposition} unchanged. The magnitude of these constants can however be fixed when dealing with rational Bell expressions by requiring that the coefficients of all expressions be integers with greatest common divisors equal to 1. The only remaining freedom then lies in the choice of sign: $\c_{DE}\otimes\c_{FGH}$ is equaliy valid as $(-\c_{DE})\otimes(-\c_{FGH})$.

\subsection{Properties of composite inequalities}
Bell expressions can be bounded with respect to convex sets of correlations. Let us show that when these sets satisfy condition~\eqref{eq:conditionality} of the main text, the bound of a composite expression is inherited from the bounds of its sub-expressions.

\begin{proposition}\label{propositionBound}
If the following bounds hold for Bell expressions $\c'_{ax}$ and $\c''_{by}$ with respect to a no-signaling model satisfying Eq.~\eqref{eq:conditionality}:
\begin{eqnarray}
\b'_- \leq \sum_{ax}\c'_{ax}P(a|x) \leq \b'_+ \label{eq:boundscp}\\
\b''_- \leq \sum_{by}\c''_{by}P(b|y) \leq \b''_+,\label{eq:boundscpp}
\end{eqnarray}
then expression $\c_{abxy}=\kappa\mu_{ax}\mu_{by}+\c'_{ax}\c''_{by}$ satisfies
\begin{equation}\label{eq:twoBoundsOnCompositeExpr}
\kappa + \min (\b'_- \b''_-,\ \b'_- \b''_+,\ \b'_+ \b''_-,\ \b'_+ \b''_+) \ \leq\ \sum_{abxy}\c_{abxy}P(ab|xy) \ \leq\ \kappa + \max (\b'_- \b''_-,\ \b'_- \b''_+,\ \b'_+ \b''_-,\ \b'_+ \b''_+).
\end{equation}
Moreover, these bounds are tight if the bounds on $\c'_{ax}$ and $\c''_{by}$ are.

\end{proposition}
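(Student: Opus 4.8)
The plan is to reduce the bound on the composite expression to the product of the bounds on its two factors, using no-signaling to factorize the bilinear form and condition~\eqref{eq:conditionality} to guarantee that the conditional factors remain legitimate correlations of the model. First I would dispose of the constant term: since $\mu_{ax}=\mu_{by}=1/m$, one has $\sum_{abxy}\mu_{ax}\mu_{by}P(ab|xy)=1$ for every normalized distribution, so the summand $\kappa\,\mu_{ax}\mu_{by}$ contributes exactly $\kappa$. It therefore suffices to confine $\sum_{abxy}\c'_{ax}\c''_{by}P(ab|xy)$ to the interval $[\min(\b'_\pm\b''_\pm),\,\max(\b'_\pm\b''_\pm)]$ and then add $\kappa$, which reproduces~\eqref{eq:twoBoundsOnCompositeExpr}.

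Next, mirroring the manipulation already carried out in Eq.~\eqref{eq:CHSHCHSH}, I would invoke no-signaling to write $P(ab|xy)=P_A(a|x)\,P_{B|A}(b|y,a,x)$, the A-marginal $P_A(a|x)$ being well defined (independent of $y$). This recasts the expression as $\sum_{ax}\c'_{ax}P_A(a|x)\,f(a,x)$ with $f(a,x)=\sum_{by}\c''_{by}P_{B|A}(b|y,a,x)$. By condition~\eqref{eq:conditionality}, for each fixed outcome/setting $(a,x)$ of group A the heralded conditional $P_{B|A}(\cdot|\cdot,a,x)$ is a valid correlation of the model on group B, hence $f(a,x)\in[\b''_-,\b''_+]$; likewise $V':=\sum_{ax}\c'_{ax}P_A(a|x)\in[\b'_-,\b'_+]$. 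The claimed interval $[\min(\b'_\pm\b''_\pm),\,\max(\b'_\pm\b''_\pm)]$ is precisely the product of the two factor intervals, whose extreme values are attained at the four corners $\b'_\pm\b''_\pm$, so I would aim to show the bilinear combination stays inside it.

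The hard part will be the final bounding step, because $f(a,x)$ genuinely depends on A's outcome and setting, so one cannot simply factor a single B-value out of the sum; a naive estimate overcounts and can appear to exceed $\max(\b'_\pm\b''_\pm)$. The resolution is that no-signaling forbids the conditional B-value from tracking A's choice of setting: the $P_A(\cdot|x)$-weighted average of the conditional B-states is a single B-marginal independent of $x$, which rules out the offending configurations. I would make this rigorous first for the local model, where the hidden-variable form~\eqref{eq:locality} gives
\begin{equation*}
\sum_{abxy}\c'_{ax}\c''_{by}P(ab|xy)=\int\rho(\lambda)\Big(\sum_{ax}\c'_{ax}P_A(a|x,\lambda)\Big)\Big(\sum_{by}\c''_{by}P_B(b|y,\lambda)\Big)\,d\lambda,
\end{equation*}
displaying the value as a $\rho$-average of honest products of an A-value in $[\b'_-,\b'_+]$ and a B-value in $[\b''_-,\b''_+]$, each such product thus lying in the corner interval and the average with it. For a general no-signaling model satisfying~\eqref{eq:conditionality} I would instead treat $\sum_{ax}\c'_{ax}P_A(a|x)f(a,x)$ as the optimum of a linear functional over valid A-marginals, with effective coefficients $\max(\c'_{ax}\b''_+,\c'_{ax}\b''_-)$, and exploit the marginal-consistency constraint to show this optimum never surpasses $\max(\b'_\pm\b''_\pm)$; this is the technical crux.

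Finally, for tightness I would use the \emph{preparation} direction of condition~\eqref{eq:conditionality}: taking an A-correlation saturating $\b'_+$ together with a B-correlation saturating $\b''_+$ and forming their independent product (which the model contains, B's marginal being independent of A so that no-signaling holds), the factorization above yields the value $\b'_+\b''_+$. Running through the four sign combinations realizes every corner product $\b'_\pm\b''_\pm$, so both inequalities in~\eqref{eq:twoBoundsOnCompositeExpr} are saturated whenever the factor bounds are. The lower bound itself needs no separate work, as it is the upper bound applied to $-\c'$ (equivalently $-\c''$), under which the four corner products are merely permuted and $\max$ is exchanged with $\min$.
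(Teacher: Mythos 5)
Your skeleton coincides with the paper's: strip off the constant $\kappa$, use no-signaling to write the expression as $\kappa+\sum_{ax}\c'_{ax}P(a|x)\sum_{by}\c''_{by}P(b|y,ax)$, invoke condition~\eqref{eq:conditionality} together with~\eqref{eq:boundscpp} to place each conditional value $f(a,x)=\sum_{by}\c''_{by}P(b|y,ax)$ in $[\b''_-,\b''_+]$, and obtain tightness from a product of extremal A- and B-correlations. You have also correctly located the one delicate step: the paper concludes by ``considering any possible combination of bounds'', i.e.\ it treats $f(a,x)$ as if it were a single constant, whereas it genuinely depends on $(a,x)$. Your hidden-variable argument via Eq.~\eqref{eq:locality} closes this step cleanly for the local model (a route the paper does not take: the value becomes a $\rho$-average of honest products, each lying in the corner interval).

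The gap is that the proposition is asserted for every model satisfying~\eqref{eq:conditionality} (quantum and no-signaling included), and for those your ``technical crux'' is left unresolved; worse, the specific relaxation you propose cannot work. Bounding $\sum_{ax}\c'_{ax}P(a|x)f(a,x)$ by $\sum_{ax}P(a|x)\max(\c'_{ax}\b''_+,\,\c'_{ax}\b''_-)$ optimizes each $f(a,x)$ independently and thereby discards exactly the constraint that saves the statement, namely that the conditional B-correlations for the different settings $x$ of A must average, with weights $P(a|x)$, to one and the same B-marginal. Concretely, let each group be a single party with two binary settings and take $\c'_{11}=\c''_{11}=1$, $\c'_{22}=\c''_{22}=-1$, all other coefficients zero: then $\b'_\pm=\b''_\pm=\pm1$ and the corner maximum is $1$, but your effective coefficients reduce to $|\c'_{ax}|$ and their optimum over A-marginals is $2$. (The composite expression here is $P(11|11)-P(12|12)-P(21|21)+P(22|22)$, whose true no-signaling maximum is $1$, as a check over the vertices of the $2222$ no-signaling polytope confirms, so the proposition holds -- your relaxation simply fails to recover it.) To be fair, the paper's own proof is equally silent on this point; but as submitted your argument establishes~\eqref{eq:twoBoundsOnCompositeExpr} only for the local set, and the quantum and no-signaling cases each still require a dedicated argument (for the quantum set, e.g., via the spectrum of the tensor product of the two Bell operators).
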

\begin{proof}
Let us consider the value of the expression defined by $\c_{abxy}$:
\begin{equation}
\begin{split}
\sum_{abxy}\c_{abxy}P(ab|xy) &= \sum_{abxy}\kappa \mu_{ax}\mu_{by} P(ab|xy) + \sum_{abxy}\c'_{ax}\c''_{by}P(ab|xy)\\
&= \kappa + \sum_{ax} \c'_{ax} P(a|x) \sum_{by} \c''_{by} P(b|y,ax),
\end{split}
\end{equation}
where we used the no-signaling condition. Now because of condition~\eqref{eq:conditionality} and~\eqref{eq:boundscpp}, we must have that $\b''_- \leq \sum_{by} \c''_{by} P(b|y,ax) \leq \b''_+$ for all $a,x$. Moreover, these bounds must be tight if the ones for unconditioned correlations are. This gives the lower and upper bound of \eqref{eq:twoBoundsOnCompositeExpr} by considering any possible combination of bounds on this expression together with bounds on $\c'_{ax}$. These combined bounds are achievable by construction.
\end{proof}

From proposition~\ref{propositionBound} and the fact that condition~\eqref{eq:conditionality} is satisfied for the sets of local, quantum and no-signaling correlations, we deduce that the local, quantum and no-signaling bounds of composite inequalities are inherited from the corresponding bounds on each of their component expression.

On the other hand, given a bound $\b$ on $\c$ for models satisfying~\eqref{eq:conditionality}, Eq.~\eqref{eq:twoBoundsOnCompositeExpr} also puts constraints on the possible bounds that $\c'$ and $\c''$ can have. For instance, if the bounds on one of the sub-expressions are known (because it is a single-party expression for instance) then bounds can be deduced for the second expression. A particular example of this is when $\b'_-$ and $\b'_+$ are known and strictly positive. In this case, it is clear that $\b=\kappa+\max (\b'_-\b''_-,\b'_-,\b''_+,\b'_+,\b''_-,\b'_+,\b''_+)=\kappa+\b'_+\b''_+$, which allows one to deduce the value of $\b''_+$.

Finally, when the sets of correlations of interest admits facets (e.g. it is a polytope), any two facets can be combined to produce a new facet.

\begin{proposition}
If $\sum_{ax}\c'_{ax}P(a|x)\leq\b'$ and $\sum_{by}\c''_{by}P(b|y)\leq\b''$ define facets for a model satisfying Eq.~\eqref{eq:conditionality}, then 
\begin{equation}\label{eq:compositeFacet}
\sum_{abxy}\c_{abxy}P(ab|xy)\leq 0   \quad \text{with}\quad   \c_{abxy} = -(\c'_{ax}-\b\mu_{ax})(\c''_{by}-\b\mu_{by})
\end{equation}
is also a facet for the same model.
\end{proposition}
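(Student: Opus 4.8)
The plan is to first verify that \eqref{eq:compositeFacet} is a valid and tight inequality, and then to establish that the face it cuts out has dimension one less than that of the polytope $\mathcal{P}$ by producing $\dim\mathcal{P}$ affinely independent correlations that saturate it. Throughout, write $\mathcal{P}_A$, $\mathcal{P}_B$ for the sets produced by the model for the two groups of parties and $\mathcal{P}$ for the combined one, with dimensions $d_A$, $d_B$ and $d$; in the normalized no-signaling subspace one has $d = d_A d_B + d_A + d_B = (d_A+1)(d_B+1)-1$. I abbreviate $\tilde\c'_{ax} = \c'_{ax}-\b'\mu_{ax}$ and $\tilde\c''_{by} = \c''_{by}-\b''\mu_{by}$, so that $\c_{abxy} = -\tilde\c'_{ax}\tilde\c''_{by}$.

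First I would note that, by normalization, $\sum_{ax}\tilde\c'_{ax}P(a|x) = \big(\sum_{ax}\c'_{ax}P(a|x)\big)-\b' \le 0$, with equality precisely on the facet $F' = \{P:\sum_{ax}\c'_{ax}P(a|x)=\b'\}$ of $\mathcal{P}_A$, and likewise for $\tilde\c''$ and $F''$. Thus the shifted expressions have vanishing upper bound, and Proposition~\ref{propositionBound} (applied with $\kappa=0$ to the shifted expressions, whose upper bounds are $0$) yields $\sum_{abxy}\c_{abxy}P(ab|xy)\le 0$, attained because the facet bounds are tight. This is exactly \eqref{eq:compositeFacet}, so the inequality is valid and tight, and the face $F=\{P\in\mathcal{P}:\sum_{abxy}\c_{abxy}P(ab|xy)=0\}$ is a proper face.

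To get the dimension I would use the factorization of the expression on product correlations: for $P=P'\otimes P''$ one has $\sum_{abxy}\c_{abxy}P(ab|xy) = -\big(\sum_{ax}\tilde\c'_{ax}P'(a|x)\big)\big(\sum_{by}\tilde\c''_{by}P''(b|y)\big)$, so such a product lies on $F$ as soon as $P'\in F'$ \emph{or} $P''\in F''$. I would then pick $d_A$ vertices $Q'_1,\dots,Q'_{d_A}$ of $\mathcal{P}_A$ lying on $F'$ and affinely spanning it (these exist since $\dim F'=d_A-1$), together with one vertex $Q'_0\notin F'$; as $Q'_0$ lies off the hyperplane carrying the others, the $d_A+1$ points are affinely independent and span $\mathcal{P}_A$. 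Choosing $Q''_0,\dots,Q''_{d_B}$ analogously, the products $Q'_i\otimes Q''_j$ belong to $\mathcal{P}$ (products of vertices, e.g. local deterministic points for the local polytope), and all of them except $Q'_0\otimes Q''_0$ lie on $F$. This yields $(d_A+1)(d_B+1)-1=d$ candidate saturating points.

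The crux is to show these $d$ products are affinely independent; naively one might fear they are too degenerate, since each product carries no connected correlation, but affine independence is a purely linear question and is settled by homogenization. Write $\hat Q'_i=(1,Q'_i)$ and $\hat Q''_j=(1,Q''_j)$; affine independence of the $Q'_i$ (resp. $Q''_j$) means the $\hat Q'_i$ (resp. $\hat Q''_j$) are linearly independent, hence the tensors $\hat Q'_i\otimes\hat Q''_j$ are linearly independent. Each such tensor splits into blocks $(1,\,Q''_j,\,Q'_i,\,Q'_i\otimes Q''_j)$, whose last block is exactly the joint vector $P_{ij}(ab|xy)=Q'_i(a|x)Q''_j(b|y)$. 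Since the constant and both marginals are \emph{linear} functions of the joint distribution (marginalization is linear), there is a single linear map sending $(1,P_{ij})$ to $\hat Q'_i\otimes\hat Q''_j$; as the images are linearly independent, so are the $(1,P_{ij})$, i.e. the $P_{ij}$ are affinely independent. A subfamily of an affinely independent family is affinely independent, so the $d$ points with $(i,j)\neq(0,0)$ are affinely independent. Hence $\dim F\ge d-1$, and since $F$ is proper, $\dim F=d-1$, so \eqref{eq:compositeFacet} defines a facet. The main obstacle is exactly this affine-independence step; the rest is bookkeeping with the bounds and with the facet-spanning vertices.
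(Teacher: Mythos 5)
Your proof is correct and follows essentially the same route as the paper's: both construct the $d=(d_A+1)(d_B+1)-1$ saturating product points from $d_A$ (resp.\ $d_B$) facet-spanning points plus one completing point per group, discard the single product of the two off-facet points, and reduce independence to that of tensor products (the paper argues linear independence of the normalized vectors directly, you argue affine independence via homogenization). The only point to tighten is that membership of the products $Q'_i\otimes Q''_j$ in $\mathcal{P}$ for a \emph{general} model must be justified by condition~\eqref{eq:conditionality} itself, as the paper does, rather than by the local-polytope example in your parenthetical.
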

\begin{proof}
To demonstrate the facet property of the composite inequality, we construct a set of $d=d_1 d_2 + d_1 + d_2$ linearly independent probability vectors which saturate the inequality. Here $d_1$ denotes the dimension of the normalized no-signaling subspace $\omega_1$ for the first group of $n_1$ parties, i.e. $d_1 = (1+m(k-1))^{n_1}-1$ for a homogeneous scenario, and $d_2$ denotes the dimension of $\omega_2$ for the second group of parties. The dimension of the normalized no-signaling space $\omega$ is $d=(1+d_1)(1+d_2)-1=d_1 d_2 + d_1 + d_2$.

The fact that $\sum_{ax}\c'_{ax}P(a|x)\leq\b'$ is a facet in $\omega_1$ implies that there exist $d_1$ linearly independent probability vectors $q_{i,ax}$ which belong to the first set of correlations and saturate this inequality. Since $\omega_1$ satisfies the normalization condition, it lives in an affine subspace of the full proobability space. The set $\{q_{i,ax}\}_i$ can thus be completed by one vector from the set of correlations allowed by the model for the first $n_1$ parties to form a basis $\{r_{i,ax}\}_i\supset\{q_{i,ax}\}_i$ of a space containing $\omega_1$.

Similarly, we can define sets $\{q_{j,by}\}_j$ and $\{r_{j,by}\}_j\supset\{q_{j,by}\}_j$ of respectively $d_2$ and $d_2+1$ linearly independent probability vectors. All these vectors belong to the set of correlations for the second group of $n_2$ parties, and every $q_{j,by}$ saturates the second inequality, i.e. satisfies $\sum_{by}\c''_{by} q_{j,by} = \b''\ \forall\ j$.

Since the considered model satisfies condition~\eqref{eq:conditionality}, conditional probabilities for the second group of parties $P(b|y,ax)$ in $\omega$ can take any value that $P(b|y)$ is allowed to take within the set $\omega_2$. Therefore, the tensor product of any two points $r_{i,ax}r_{j,by}$ defines a valid point for the set of correlations in $\omega$, for all $i$ and $j$.

One can check that the inequality is saturated by any strategy of the form $q_{i,ax}r_{j,by}$ or $r_{i,ax}q_{j,by}$. This thus define $d_1(d_2+1)+(d_1+1)d_2$ points saturating~\eqref{eq:compositeFacet} in $\omega$. Since all points $r_{i,ax}r_{j,by}$ are linearly independent (they form a basis of a $(d_1+1)(d_2+1)$ dimensional space), these vectors are all linearly independent except for duplicates. There are $d_1 d_2$ duplicates, of the form $q_{i,ax}q_{j,by}$. This leaves us with $d_1d_2+d_1+d_2$ linearly independent points in the allowed set of correlations which saturate~\eqref{eq:compositeFacet}. This inequality is thus a facet.



%

\end{proof}

%
%

\end{document}